\newtheorem{theorem}{Theorem}
\newtheorem{cor}{Corollary}
\newtheorem{lemma}{Lemma}
\newenvironment{proof}{\paragraph{Proof:}}{\hfill$\square$}
\newtheorem{claim}{Claim}
\newtheorem{remark}{Remark}
\newcommand{\suggestion}[1]{{\color{blue} #1}}
\newcommand{\jyoti}[1]{{\color{black} #1}}
\newcommand{\attention}[1]{{\color{red} #1}}
\newcommand{\remove}[1]{{}}
\newcommand{\question}[1]{{\color{orange} #1}}
\newcommand{\Chek}{\it Check} 
\title{Compatible Paths on Labelled Point Sets\thanks{\jyoti{A preliminary version of the paper was presented at the 30th Canadian Conference on Computational Geometry (CCCG)~\cite{DBLP:conf/cccg/ArsenevaBBCCIJL18}}}}
\author{
Elena Arseneva\thanks{Universit\'e libre de Bruxelles (ULB), Belgium. {\tt ea.arseneva@gmail.com}}
\and 
Yeganeh Bahoo\thanks{Department of Computer Science,  University of Manitoba, Canada. {\tt bahoo@cs.umanitoba.ca}}
\and Ahmad Biniaz\thanks{School of Computer Science, University of Windsor, Canada. {\tt ahmad.biniaz@gmail.com }}
\and Pilar Cano\thanks{Department of Computer Science, Carleton University, Canada \& Universitat Polit\'ecnica de Catalunya, Spain {\tt m.pilar.cano@upc.edu}} 
\and  Farah Chanchary \thanks{Department of Computer Science, Carleton University, Canada {\tt  farah.chanchary@carleton.ca}} 
\and John Iacono\thanks{Universit\'e libre de Bruxelles, Belgium \& NYU, USA. {\tt jiacono@ac.ulb.be}}
\and Kshitij Jain\thanks{Cheriton School of Computer Science,  University of Waterloo, Canada. {\tt  \{k22jain,alubiw\}@uwaterloo.ca}}
\and Anna Lubiw\footnotemark[7] 
\and  Debajyoti Mondal\thanks{Department of Computer Science, University of Saskatchewan, Canada. {\tt dmondal@cs.usask.ca}}
\and  Khadijeh Sheikhan\thanks{NYU Tandon School of Engineering, Brooklyn, USA. {\tt khadijeh@nyu.edu}}
\and Csaba D. T\'{o}th\thanks{Department of Mathematics, California State University Northridge, Los Angeles, CA, USA. {\tt csaba.toth@csun.edu}} 
}
\begin{document}
\thispagestyle{empty}
\maketitle

\begin{abstract}
Let $P$ and $Q$ be finite point sets of the same cardinality in $\mathbb{R}^2$, each labelled from $1$ to $n$.
Two noncrossing geometric graphs $G_P$ and $G_Q$ spanning $P$ and $Q$, respectively, are called \emph{compatible} if for every face $f$ in $G_P$, there exists a corresponding face in $G_Q$ with the same clockwise ordering of the vertices on its boundary as in $f$.  
In particular, $G_P$ and $G_Q$ must be straight-line embeddings of the same connected $n$-vertex graph.
 
 \jyoti{Deciding whether two labelled point sets admit compatible geometric paths is known to be NP-complete.}
 We give polynomial-time algorithms to find compatible paths or report that none exist in three scenarios:
$O(n)$ time for points in convex position; 
$O(n^2)$ time for two simple polygons, where the paths are restricted to remain inside the closed polygons;
and $O(n^2 \log n)$ time for points in general position if the paths are restricted to be monotone.
\remove{
In this paper we examine some scenarios where compatible paths can be computed efficiently. If the input point sets are in convex position, we can compute compatible 
paths or certify that no such path exists in $O(n)$ time.  
For point sets in general position, we can 
compute compatible monotone paths or report that none exists 
in $O(n^2 \log n)$ time. 
}
\end{abstract}

\section{Introduction}
Computing noncrossing geometric graphs on finite point sets that are in some sense `compatible' is an active area of research in computational geometry. The study of compatible graphs is motivated by applications to shape animation and simultaneous graph drawing~\cite{baxter2009compatible,surazhsky2004high}. 

Let $P$ and $Q$ be finite point sets, each containing $n$ points in the plane labelled from 1 to $n$. Let $G_P$ and $G_Q$ be two 
noncrossing geometric graphs spanning $P$ and $Q$, respectively. $G_P$ and $G_Q$ are called \emph{compatible}, if for every face $f$ in $G_P$, there exists a corresponding face in $G_Q$ with the same clockwise ordering of the vertices on its boundary as in $f$. 
It is necessary, but not sufficient, that $G_P$ and $G_Q$ represent the same connected $n$-vertex graph $G$.
Given a pair of labelled point sets, it is natural to ask whether 
they have compatible graphs, and if so, to produce one such pair, $G_P, G_Q$. 
The question can also be restricted to specific graph classes such as paths, trees, triangulations, and so on; previous work (described below) has concentrated on compatible triangulations. 
Compatible triangulations of polygons are also of interest, which motivated us to examine compatible paths inside simple polygons.

In this paper we examine the problem of computing compatible paths on labelled point sets. 
Equivalently, we seek a permutation of the labels $1, 2, \ldots, n$ that corresponds to a noncrossing (plane) path in $P$ and in $Q$. 
Figures~\ref{fig:paths}(a)--(b) show a positive instance of this problem, and Figures~\ref{fig:paths}(c)--(d) depict an affirmative answer. 

\jyoti{Hui and Schaefer~\cite{DBLP:conf/isaac/HuiS04} proved the problem to be NP-complete. In this paper we develop some fast  polynomial-time algorithms to solve this problem under some restricted scenarios.}

\medskip
\noindent
{\bf Our results.}
We describe a quadratic-time dynamic programming algorithm that either finds compatible 
paths for two simple polygons, where the paths are restricted to remain inside the closed polygons, or reports that no such paths exist.
For the more limited case of two point sets in convex position, we give a linear-time algorithm to find compatible paths (if they exist). \jyoti{A linear-time algorithm for convex point sets  was also briefly outlined by Hui and Schaefer~\cite{DBLP:conf/isaac/HuiS04}. We  were not aware of this result during the preparation of the conference version~\cite{DBLP:conf/cccg/ArsenevaBBCCIJL18} of this paper.}
For two general point sets, we give an $O(n^2 \log n)$-time 
algorithm to find compatible monotone paths (if they exist).

\begin{figure*}[h]
\centering
\includegraphics[width=\textwidth]{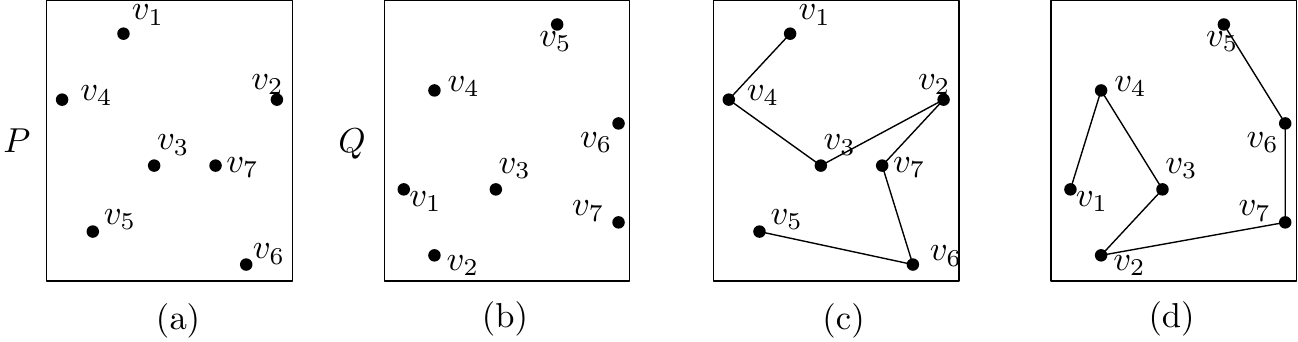}
\caption{(a)--(b)  A pair of labelled point sets $P$ and $Q$.  (c)--(d) A pair of compatible paths. 
}
\label{fig:paths}
\end{figure*}

\subsection{Background}
Saalfeld~\cite{DBLP:conf/compgeom/Saalfeld87} first introduced compatible triangulations of labelled point sets,  which he called ``joint'' triangulations. In  Saalfeld's problem, each point set is enclosed inside an axis-aligned rectangle, and the goal is to compute  compatible triangulations (possibly using Steiner points). Although not every pair of labelled point sets admit compatible triangulations, Saalfeld showed that one can always construct compatible triangulations using (possibly an exponential number of) Steiner points. 

Aronov et al.~\cite{DBLP:journals/comgeo/AronovSS93} proved that $O(n^2)$ Steiner points are always sufficient and sometimes necessary to compatibly triangulate two polygons when the vertices of the polygons are labelled $1, 2, \ldots, n$ in clockwise order. Babikov et al.~\cite{DBLP:conf/cccg/BabikovSW97} extended the $O(n^2)$ bound to \emph{polygonal regions} (i.e., polygons with holes), where the holes are also labelled `compatibly' (with the same clockwise ordering of labels). The holes may be single points, so this includes Saalfeld's ``joint triangulation'' problem. Pach et al.~\cite{DBLP:journals/algorithmica/PachSS96} gave an  $\Omega(n^2)$ lower bound on the number of Steiner   points in such scenarios. 

Lubiw and Mondal~\cite{Lubiw-Mondal-2017} proved that finding the minimum number of Steiner points is NP-hard for the case of polygonal regions.  The complexity status is open for the case of polygons, and also for 
 point sets. 
Testing for compatible triangulations without Steiner points may be an easier problem.
Aronov et al.~\cite{DBLP:journals/comgeo/AronovSS93} gave a polynomial-time dynamic programming algorithm 
to test whether two polygons admit compatible triangulations without Steiner points. But testing whether there are compatible triangulations without Steiner points is open for polygonal regions, as well as  
for point sets. 

The compatible triangulation problem seems challenging even for unlabelled point sets (i.e., when a bijection between $P$ and $Q$ can be chosen arbitrarily). Aichholzer et al.~\cite{aichholzer2003towards} conjectured that every pair of unlabelled point sets (with the same number of points on the convex hull) admit compatible triangulations without Steiner points. So far, the conjecture has been verified only for point sets with at most three interior points.


Let $G_S$ be a complete geometric graph on a point set  $S$. Let $H(S)$ be  the \emph{intersection graph} of the edges of $G_S$, i.e., each edge of $G_S$ corresponds to a vertex in $H(S)$, and two vertices are adjacent in $H(S)$ if and only if the corresponding edges in $G_S$ properly cross (i.e., the open line segments intersect). Every plane triangulation on $S$ has $3n{-}3{-}h$ edges, where $h$ is the number of points on the convex hull of $S$, and thus corresponds to a 
maximum 
independent set 
in $H(S)$. In fact, $H(S)$ belongs to the class of \emph{well-covered graphs}. (A graph is well covered if every maximal independent set of the graph has the same cardinality). A rich body of research attempts to characterize well-covered graphs~\cite{DBLP:journals/dam/FinbowHP17,DBLP:journals/jct/TankusT97}. Deciding whether two point sets, $P$ and $Q$, admit compatible triangulations is equivalent to testing whether $H(P)$ and $H(Q)$ have a common independent set of size $3n{-}3{-}h$. 


\section{Paths in Polygons and Convex Point Sets}

In this section we describe algorithms to find compatible paths on simple polygons and convex point sets. 
By compatible paths on polygons, we mean: given two polygons, find two compatible 
paths on the vertices of the polygons that are constrained to be non-exterior to the polygons.
(See Figures~\ref{fig:dp}(a)--(b).)
Note that convex point sets correspond to a special case, where the polygons are the convex hulls. 

\jyoti{\subsection{Negative Instances}}
Not every two convex point sets admit compatible paths, e.g., 5-point sets where the points are labelled (1,2,3,4,5) and (1,3,5,2,4), resp., in counterclockwise order. 

In this section we show that for every $n\ge 5$, there exist two convex labelled point sets, each containing $n$ points, that do not admit compatible trees. Note that this also rules out the existence of compatible paths.    

\begin{claim}  
\label{claim:tree-interval}
Let $P$ and $Q$ be point sets in convex position, each containing $n\geq 2$ points labelled by $\{1,2,\ldots , n\}$. If they admit a compatible tree that is not a star, then there exists a partition $\{1,2,\ldots ,n\}=A\cup B$ such that $2\leq |A|\leq |B|\leq n-2$ such that $A$ and $B$ are interval sets for both $P$ and $Q$.
\end{claim}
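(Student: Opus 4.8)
The plan is to isolate a single structural fact about noncrossing spanning trees on convex point sets and then to apply it twice, once for $P$ and once for $Q$. The key lemma I would prove is the following \emph{separation property}: if $T$ is any noncrossing geometric spanning tree on a convex point set $S$ and $uv$ is an edge of $T$, then the vertex sets of the two components of $T-uv$ are complementary arcs of $S$ in convex position (equivalently, each is an interval set for $S$). Granting this lemma, the claim will follow quickly.

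First I would reduce the claim to finding the right edge. A tree is a star exactly when its diameter is at most $2$; since $T$ is not a star, $T$ contains a path $v_0 v_1 \cdots v_d$ with $d \ge 3$. Deleting the middle edge $v_1 v_2$ then splits $T$ into two subtrees, one containing $\{v_0,v_1\}$ and the other containing $\{v_2,\dots,v_d\}$, so both parts have at least two vertices. Calling the two vertex sets $A$ and $B$, ordered so that $|A| \le |B|$, and using $|A|+|B|=n$, I obtain $2 \le |A| \le |B| \le n-2$, which is the required size bound.

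Next I would apply the lemma to each point set. Because the tree is \emph{compatible}, the same abstract tree $T$ is realized as a noncrossing straight-line drawing on $P$ and on $Q$, and $uv$ is an edge in both drawings. Applying the lemma to the drawing on $P$ shows that $A$ and $B$ are complementary arcs in the convex order of $P$, hence interval sets for $P$; applying it to the drawing on $Q$ gives the same for $Q$. This produces exactly the partition asserted by the claim. Note that I only use that $T$ is drawn plane on both point sets, not the full matching of face boundaries.

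The heart of the argument, and the step I expect to be the main obstacle, is the lemma itself, which I would establish by a Jordan-curve (crossing) argument. Suppose $A$ were not an arc. Then, reading the labels around the convex hull, $A$ and $B$ would interleave, so there would exist four points in cyclic order $a_1, b_1, a_2, b_2$ with $a_1,a_2 \in A$ and $b_1,b_2 \in B$. The path $\pi_A$ joining $a_1$ and $a_2$ inside the subtree on $A$ is a simple crosscut of the convex hull whose endpoints separate $b_1$ from $b_2$ on the boundary circle; hence any curve from $b_1$ to $b_2$ must cross $\pi_A$. But the path $\pi_B$ joining $b_1$ and $b_2$ inside the subtree on $B$ is vertex-disjoint from $\pi_A$ and, since $T$ is noncrossing, shares no crossing with it, a contradiction. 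Hence $A$ (and symmetrically $B$) is an arc. The technical care needed here is to argue cleanly that $\pi_A$, being a union of noncrossing chords with both endpoints on the hull, genuinely acts as a crosscut separating the two boundary arcs, and that the vertex-disjointness of $A$ and $B$ together with the planarity of $T$ rules out the forced crossing.
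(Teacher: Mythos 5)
Your proof is correct and follows essentially the same route as the paper: locate an edge of the non-star tree whose removal leaves two components of size at least two (your ``middle edge of a diameter path'' is equivalent to the paper's ``edge between two vertices of degree at least two''), then observe that the two components must form complementary arcs in each convex point set. The only difference is that you actually justify the arc/interval step with the standard interleaving--crossing argument, whereas the paper asserts it without proof.
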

\begin{proof}
Suppose that $P$ and $Q$ admit a compatible tree $T$, which is not a star. Then $T$ has an edge $e$ between two vertices of degree two or higher. The deletion of $e$ decomposes $T$ into two subtrees, say $T_1$ and $T_2$, each with at least two vertices. The vertex sets of $T_1$ and $T_2$, resp., correspond to an interval set in $P$ and $Q$. 
\end{proof}

\begin{figure}[h]
\centering
\includegraphics[width=.6\textwidth]{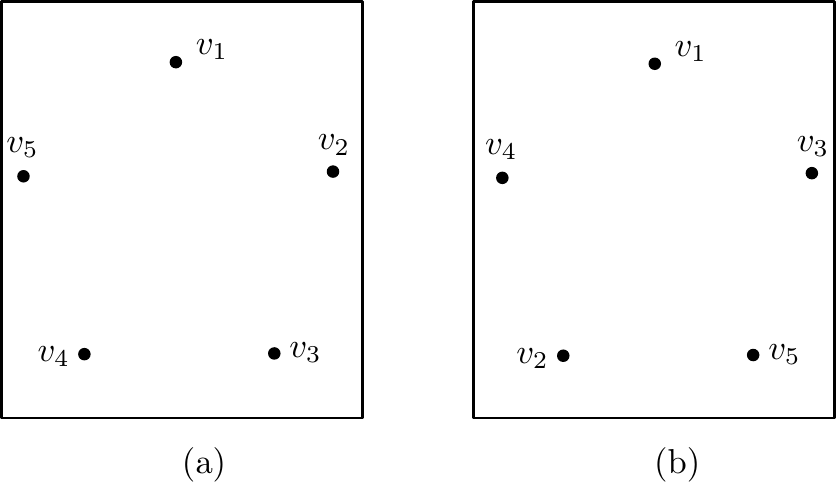}
\caption{Illustration for Lemma~\ref{lem:construction}. 
}
\label{fig:smallps}
\end{figure}

\begin{theorem}
\label{lem:construction}
For every integer $n\geq 5$, there exist two sets, $P_n$ and $Q_n$, 
each of $n$ labelled points in convex position, such that
    $P_n$ and $Q_n$ do not admit any compatible tree.
\end{theorem}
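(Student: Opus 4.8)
The plan is to reduce the geometric statement to a purely combinatorial one about the two cyclic label orders, and then exhibit an explicit pair of orders and verify it. Put the points of $P_n$ in convex position with clockwise label order $\pi_P = (1,2,\ldots,n)$, and let $\pi_Q$ be the clockwise label order of $Q_n$, a permutation to be chosen. Since both sets are in convex position, a subset $A$ of labels is an interval set for a point set exactly when $A$ forms a contiguous arc (a block) in the corresponding cyclic order; for $P_n$ these blocks are precisely the sets of cyclically consecutive labels $\{i,i+1,\ldots,j\}$. By Claim~\ref{claim:tree-interval}, any compatible tree that is not a star forces a partition into a common block $A$ and its complement with $2 \le |A| \le n-2$. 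Hence it suffices to choose $\pi_Q$ so that (i) no set of cyclically consecutive labels of size between $2$ and $n-2$ is a block of $\pi_Q$, and, separately, (ii) $P_n$ and $Q_n$ admit no compatible star.

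I would first dispatch the star case by a counting argument that reduces it to (i). A star has a single face whose boundary walk lists the leaves in the rotational order of the edges around the center $c$; for a convex point set this order is exactly the cyclic label order with $c$ deleted. Thus a compatible star centered at $c$ can exist only if $\pi_P$ and $\pi_Q$ become identical cyclic sequences after deleting $c$. Deleting $c$ from $\pi_P=(1,\ldots,n)$ leaves intact all $n-2$ consecutive pairs $\{i,i+1\}$ not incident to $c$, so these must all be adjacencies of $\pi_Q$ with $c$ deleted. Since deleting one element from a cyclic sequence introduces only a single new adjacency (the merge of the two neighbors of $c$), at least $n-3$ of the pairs $\{i,i+1\}$ must already be adjacent in $\pi_Q$ itself. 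But a size-$2$ common block is exactly a consecutive pair $\{i,i+1\}$ that is adjacent in $\pi_Q$, so property (i) in the case of size $2$ already forbids any such adjacency; for $n \ge 4$ this is a contradiction. Consequently (i) alone rules out compatible stars and, via Claim~\ref{claim:tree-interval}, all other compatible trees, and hence all compatible paths.

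It then remains to construct $\pi_Q$ satisfying (i). Generalizing the $5$-point example $(1,3,5,2,4)$, I would place consecutive labels a near-half step apart around $Q_n$: for odd $n$ take the step-two order $(1,3,5,\ldots,n,2,4,\ldots,n-1)$, so that labels $\ell$ and $\ell+1$ sit $s=(n+1)/2$ positions apart. Because the complement of a block is a block in both orders, a common block of size $k$ exists iff one of size $n-k$ does, so it is enough to rule out sizes $2 \le k \le \lfloor n/2 \rfloor$. For this construction the positions occupied by a consecutive-label set $\{a, a+1, \ldots, a+k-1\}$ form an arithmetic progression modulo $n$ with common difference $s$; since $2s \equiv 1 \pmod n$, this progression alternates between two arcs roughly $n/2$ apart, and therefore fails to be a contiguous block for every $k$ in the required range once $n \ge 5$. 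I would make this ``two-arc'' estimate precise, for instance by exhibiting two distinct gaps in the occupied-position set, which gives (i).

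The main obstacle is this last verification, and in particular handling all $n$ uniformly. The clean step-two order needs its step coprime to $n$, which holds only for odd $n$; for even $n$ I would replace $s$ by a value coprime to $n$ near $n/2$ (such as $n/2+1$ when $4 \mid n$) and redo the arithmetic-progression analysis, taking care of the cyclic wrap-around that tends to create an unwanted size-$2$ block. I expect the bookkeeping of these gap computations across the two parities, rather than any conceptual difficulty, to be the delicate part; the reduction via Claim~\ref{claim:tree-interval} and the star-counting argument are the conceptual core and are parity-independent.
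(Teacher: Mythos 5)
Your overall strategy matches the paper's: invoke Claim~\ref{claim:tree-interval} to reduce the non-star case to forbidding a partition into common interval sets, handle stars separately, and exhibit an explicit pair of cyclic label orders. Your star argument is in fact a nice refinement --- deducing the star case from the absence of common $2$-blocks by counting which adjacencies survive the deletion of the centre is cleaner and more unified than the paper's direct ``the two neighbours of each vertex differ'' observation. The genuine gap is in the construction, and it is structural rather than bookkeeping. A constant-step order (label $\ell+1$ placed $s$ positions after label $\ell$) is a bijection onto the $n$ positions only when $\gcd(s,n)=1$, and for $n=6$ the only units modulo $6$ are $\pm 1$, which force every consecutive pair $\{\ell,\ell+1\}$ to be adjacent in $\pi_Q$. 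So no arithmetic-progression order of the kind you describe exists for $n=6$; your fallback $s=n/2+1$ covers only $4\mid n$, and for $n\equiv 2\pmod 4$ any admissible step is far from $n/2$, so the ``two arcs roughly $n/2$ apart'' picture you lean on no longer applies and condition (i) must be re-verified for all $2\le k\le n-2$ by a genuinely different (three-distance-type) analysis --- which you also only sketch in the odd case. Valid orders do exist for $n=6$, e.g.\ $(1,3,5,2,6,4)$, but they are not of constant-step form, so this case needs its own construction and verification.

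For contrast, the paper avoids all parity issues with one uniform construction: it interleaves the residue classes modulo $5$ in the order $N_1,N_3,N_0,N_2,N_4$, so that any five consecutive integers $j,\dots,j+4$ induce the pentagram pattern $(1,3,5,2,4)$ in $Q_n$, and every candidate interval partition is refuted by restricting to a suitable window of five consecutive labels. Note, however, that you cannot simply substitute that construction into your argument: your star step relies on there being \emph{no} common $2$-block whatsoever, and the paper's order does create wrap-around common $2$-blocks for some $n$ (already for $n=6$ its order $(1,6,3,5,2,4)$ makes $\{6,1\}$ a common block of $P_6$ and $Q_6$). So either you keep your stronger ``no common block of any size $2,\dots,n-2$'' requirement and supply a construction that truly meets it for every $n\ge 5$ (the AP family does not), or you weaken the requirement and argue the star and wrap-around cases separately as the paper does.
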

\begin{proof}
For $n=5$, let $P_5$ and $Q_5$ be point sets labelled $(1,2,3,4,5)$ and $(1,3,5,2,4)$, respectively, in counterclockwise order (Figure~\ref{fig:smallps}). 
%
%
If a compatible star exists, then the four leaves would appear in the same counterclockwise order in both $P_5$ and $Q_5$ (by the definition of compatibility). However, the two convex sets have distinct counterclockwise 4-tuples. If there is a compatible tree that is not a star, then by Claim~\ref{claim:tree-interval}, 
a 2-element set $A\subset \{1,2,3,4,5\}$ is an interval set for both $P_5$ and $Q_5$.
 However, all five consecutive pairs along the convex hull of $P_5$ are nonconsecutive in the convex hull of $Q_5$. Therefore, $P_5$ and $Q_5$ do not admit any compatible tree.  

For $n>5$, we can construct $P_n$ and $Q_n$ analogously. Let $P_n$ be labelled $(1,2\ldots ,n)$ in counterclockwise order. For $i=0,1,2,3,4$, let $N_i$ be the sequence of labels in $\{1,2,\ldots , n\}$ congruent to $i$ modulo 5 in increasing order. Now let $Q_n$ be labelled by the concatenation of the sequences $N_1,N_3,N_0,N_2,N_4$ in counterclockwise order.

If a compatible star exists, then the $n-1$ leaves would appear in the same counterclockwise order in both $P_n$ and $Q_n$ (by the definition of compatibility). However, the both neighbors of a vertex in $P_n$ are different from the two neighbors in $Q_n$, consequently $P_n$ and $Q_n$ do not share any counterclockwise $(n-1)$-tuple. If there is a compatible tree that is not a star, then by Claim~\ref{claim:tree-interval},  there is a partition $\{1,2,\ldots , n\}=A\cup B$ into interval sets, where $|A|,|B|\geq 2$. However, $A$ and $B$ cannot partition any subset of 5 consecutive elements in sequence $(1,2,\ldots , n)$, similarly to the case when $n=5$. Consequently, $P_n$ and $Q_n$ do not admit any compatible tree.
\end{proof}

\jyoti{\subsection{Algorithms}}
We first give a 
quadratic-time dynamic programming algorithm for simple polygons,
and then a linear time algorithm for convex point sets.


\begin{figure*}[h]
\centering
\includegraphics[width=.6\textwidth]{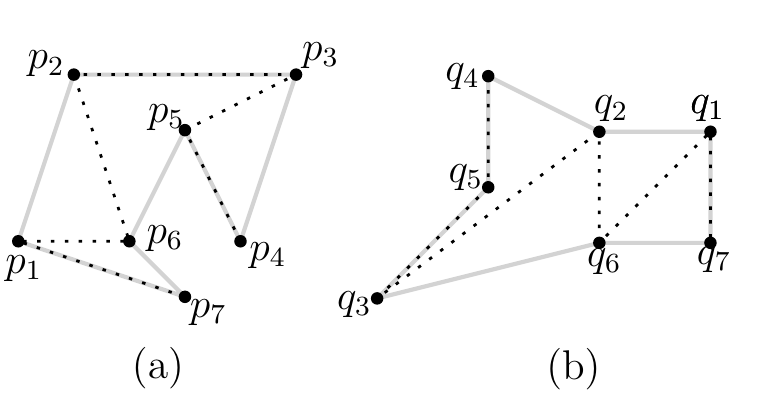}
\caption{Compatible paths on a pair of labelled polygons. The paths are drawn with dotted lines.}
\label{fig:dp}
\end{figure*}

We begin with two properties of any noncrossing path that visits all vertices of a simple polygon.
Let $P$ be a simple polygon with vertices $p_1, p_2, \ldots, p_n$ in some order (so the vertices have labels $1, 2, \ldots, n)$.
Let $\sigma$ be a label sequence corresponding to a noncrossing path that lies inside $P$ and visits all vertices of $P$. 
Define an \emph{interval} on $P$ to be a sequence of labels that appear consecutively around the boundary of $P$ (in clockwise or counterclockwise order).
For example, in Figure~\ref{fig:dp}(a), one interval is $(2, 1, 7, 6)$. 
Define an \emph{interval set} on $P$ to be the unordered set of elements of an interval.

\begin{claim}  
\label{claim:interval}
The set of labels of every prefix of $\sigma$ is an interval set on $P$.  
Furthermore, if the prefix does not contain all the labels, then the last label of the prefix corresponds to an endpoint of the interval.
\end{claim}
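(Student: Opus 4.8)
The plan is to induct on the length $k$ of the prefix. Write $\sigma=(v_1,v_2,\ldots,v_n)$ and set $I_k=\{v_1,\ldots,v_k\}$. I will prove the slightly stronger statement that for every $k$, the set $I_k$ is an interval set on $P$, and moreover, whenever $k<n$, the last vertex $v_k$ is one of the two endpoints of the corresponding boundary arc (which immediately yields both assertions of the claim). The base case $k=1$ is immediate: a single label is an interval whose unique endpoint is $v_1$.

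For the inductive step, assume the statement for $k-1$ with $1\le k-1<n$, so $I_{k-1}$ is a contiguous boundary arc and $v_{k-1}$ is one of its two endpoints; call the endpoints $a$ and $b$ and assume without loss of generality $v_{k-1}=b$. The complement $J=\{1,\ldots,n\}\setminus I_{k-1}$ is again a boundary arc, whose endpoints $a'$ and $b'$ are the boundary-neighbours of $a$ and $b$ lying outside $I_{k-1}$ (with $a'=b'$ when $|J|=1$). Since $I_{k-1}\cup\{v_k\}$ is an arc with $v_k$ at an endpoint exactly when $v_k\in\{a',b'\}$, it suffices to rule out $v_k$ being an \emph{interior} vertex of the arc $J$.

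Here is the crux. The path edge $v_{k-1}v_k=b\,v_k$ is a chord of $P$ and cuts the polygon into two sub-polygons $R_1$ and $R_2$ whose common boundary is exactly this chord; the endpoints $b,v_k$ split $\partial P$ into two arcs, one lying in $\partial R_1$ and the other in $\partial R_2$. If $v_k$ were interior to $J$, the unvisited vertices $J\setminus\{v_k\}$ would fall into two \emph{nonempty} groups -- those between $b'$ and $v_k$ (on one sub-polygon) and those between $v_k$ and $a'$ (on the other). Now consider the suffix path $F=(v_k,\ldots,v_n)$. It is a simple arc that meets the closed chord $b\,v_k$ only at its endpoint $v_k$, because $\sigma$ is noncrossing and cannot revisit the already-used vertex $b=v_{k-1}$. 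Hence $F\setminus\{v_k\}$ is connected and disjoint from the chord, so it lies entirely inside a single one of $R_1,R_2$ and cannot reach unvisited vertices on both sides -- contradicting that $\sigma$ visits all of $J\setminus\{v_k\}$. Therefore $v_k\in\{a',b'\}$, completing the step; and when $k=n$ we have $|J|=1$, so $I_n$ is the whole label set.

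The routine parts are the cycle-complement bookkeeping (that the complement of a boundary arc is a boundary arc with the stated endpoints) and the base case. The main obstacle is the separation step: one must combine the noncrossing hypothesis with the fact that $\sigma$ is \emph{Hamiltonian} (every remaining vertex still has to be visited). It is precisely Hamiltonicity that forbids a ``jump'' to a non-neighbouring vertex -- a prefix such as $(1,3,\ldots)$ on a convex pentagon is noncrossing yet cannot be completed -- and the argument formalises this by trapping the leftover vertices on opposite sides of the chord $v_{k-1}v_k$.
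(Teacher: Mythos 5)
Your proof is correct and follows essentially the same route as the paper: induction on the prefix length, with the inductive step ruling out a non-contiguous new vertex by observing that the path edge $v_{k-1}v_k$ is a chord separating the polygon, leaving unvisited vertices on both sides that the noncrossing suffix cannot all reach. Your separation step is a more careful formalization of the paper's one-line crossing argument (which phrases the same obstruction via the two polygon-neighbours of the new vertex), but the underlying idea is identical.
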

\begin{proof}
We proceed by induction on $t$, the length of the prefix, with the base case $t=1$ being obvious.  So assume the first $t-1$ 
labels form an interval set corresponding to 
interval $I$. 
Let $\ell$ be the $t$-th element of $\sigma$.
Suppose vertex $p_\ell$ is not contiguous with the interval $I$ on $P$.
Let $u$ and $v$ be the two neighbors of $p_\ell$ 
around the polygon $P$. 
Then $u$ and $v$ do not belong to $I$, and so the path must visit both of them after $p_\ell$.  
But then the subpath between $u$ and $v$ crosses the edge 
of the path that arrives at $p_\ell$,
contradicting the assumption that the path is noncrossing.
Thus vertex $p_\ell$ must appear just before or after $I$, 
forming a longer interval with $p_\ell$ as an endpoint of the interval.
\end{proof}

\begin{claim}
\label{claim:consecutive}
If $I$ is an interval on $P$ and $\sigma$ does not start or end in $I$, then the labels of $I$ appear in the same order in $\sigma$ and in $I$ (either clockwise or counterclockwise).  Note that the labels need not appear consecutively in $\sigma$.
\end{claim}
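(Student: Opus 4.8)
The plan is to exploit the growth structure of $\sigma$ furnished by Claim~\ref{claim:interval}. Writing $I_t$ for the set of the first $t$ labels of $\sigma$, that claim tells us each $I_t$ is an interval on $P$ and that the newly appended vertex is an endpoint of $I_t$. Hence the visited region is always a single boundary arc that grows by one vertex at one of its two ends. I would phrase this as two monotone \emph{fronts} leaving the start vertex $v_0$ in opposite directions: the clockwise endpoint of the arc only ever moves clockwise and the counterclockwise endpoint only ever moves counterclockwise. Every vertex other than $v_0$ is appended by exactly one of the two fronts, and the order in which a given front appends its vertices is precisely their boundary order in that direction.

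Next I would fix the interval $I$ and let $J$ be the complementary arc, so that the two gaps between $I$ and $J$ occur at the two endpoints of $I$, say $q_1$ and $q_k$ in clockwise boundary order. Because $\sigma$ does not start in $I$, the vertex $v_0$ lies in $J$ and both fronts begin inside $J$. Traversing the boundary monotonically, the clockwise front can enter $I$ only by crossing the gap at $q_1$, after which it appends $q_1, q_2, \ldots$ in increasing (clockwise) order; symmetrically, the counterclockwise front can enter $I$ only at $q_k$, after which it appends $q_k, q_{k-1}, \ldots$ in decreasing order. Thus each front, once inside $I$, sweeps a contiguous block of $I$ monotonically, and together the two fronts cover all of $I$ since $\sigma$ visits every vertex.

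The crux is to rule out both fronts entering $I$. Observe that the clockwise front reaches $I$ only after sweeping every $J$-vertex between $v_0$ and $q_1$, and likewise the counterclockwise front reaches $I$ only after sweeping the $J$-vertices between $v_0$ and $q_k$; hence if both fronts enter $I$, then all of $J$ has already been visited. The two monotone sweeps then advance toward each other inside $I$, so the last vertex appended overall — the one closing the final gap between the two endpoints of the arc — lies strictly inside $I$, contradicting the hypothesis that $\sigma$ does not end in $I$. Therefore at most one front enters $I$; since $\sigma$ visits every vertex, that single front must append all of $I$, and it does so monotonically. This is exactly the assertion that the labels of $I$ occur in $\sigma$ in their clockwise or counterclockwise boundary order.

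I expect the difficulty to be bookkeeping rather than conceptual: pinning down the cyclic indexing, justifying that the meeting point of two converging fronts is genuinely the last element of $\sigma$, and disposing of degenerate cases such as $v_0$ being adjacent to $I$ or the lone sweeping front leaving $I$ at its far endpoint and re-entering $J$. The single load-bearing fact is the monotonicity of each front, which follows immediately from Claim~\ref{claim:interval} together with the observation that the visited arc only ever grows.
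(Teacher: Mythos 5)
Your argument is correct, but it is not the paper's argument. The paper proves Claim~\ref{claim:consecutive} directly with a three-point crossing argument: it takes labels $i,j,k$ in boundary order on $I$, assumes (after a reversal WLOG, which is where the ``does not start in $I$'' hypothesis gets used) that they occur as $i,k,j$ in $\sigma$, and observes that since the last vertex $p_\ell$ lies outside $I$ the boundary order is $p_i,p_j,p_k,p_\ell$, so the subpath from $p_i$ to $p_k$ separates $p_j$ from $p_\ell$ inside the polygon and the continuation of the path through $p_j$ to $p_\ell$ must cross it. You instead derive the claim as a purely combinatorial consequence of Claim~\ref{claim:interval}: the growing prefix-interval has two monotone fronts issuing from the start vertex, each front visits its vertices in boundary order, a front can enter $I$ only through the endpoint of $I$ nearest to it, and if both fronts entered $I$ then all of the complementary arc would already be exhausted and the last vertex of $\sigma$ would lie in $I$. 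Your key steps all hold up (in particular, the ``last vertex in $I$'' contradiction works whether or not the fronts meet strictly inside $I$). The trade-off: the paper's proof is shorter, self-contained, and does not lean on Claim~\ref{claim:interval}, but it needs a fresh Jordan-curve-style crossing argument inside a possibly nonconvex polygon; yours reuses machinery already established and yields the slightly stronger structural fact that $I$ is swept by a single end of the growing interval, which is closer in spirit to how the greedy algorithm later uses these claims.
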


\begin{figure*}[h]
\centering
\includegraphics[width=.4\textwidth]{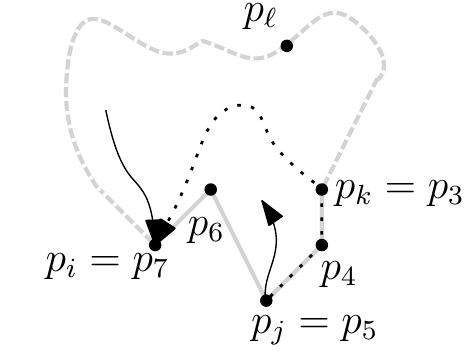}
\caption{Illustration for Claim~\ref{claim:consecutive}, where  $I=(p_7,p_6,p_5,p_4,p_3)$.  
}
\label{fig:dp-2}
\end{figure*}

\begin{proof}
Consider three labels $i,j,k$ that appear in this order in $I$.  Assume, for a contradiction, that these labels appear in a different order in $\sigma$ and suppose, without loss of generality, that they appear in the order $i,k,j$ in $\sigma$. Let $\ell$ be the last label of $\sigma$.  Because $\ell$ does not lie in $I$, the order of vertices around $P$ is $p_i,p_j,p_k,p_\ell$.  
See, e.g., 
Figure~\ref{fig:dp-2} where $i,j,k =7,5,3$.
Then the subpath of $\sigma$ from $p_i$ to $p_k$ crosses the subpath from $p_k$ to $p_\ell$, a contradiction.
%
\end{proof}

\subsubsection{An $O(n^2)$-time dynamic programming algorithm}

Let $P$, $Q$ be two $n$-vertex simple polygons with labelled vertices. 
Let $p_i$ (resp., $q_i$) be the vertex of $P$ (resp., $Q$) with the label $i$. 

Two vertices of a polygon are \emph{visible} if the straight line segment connecting the vertices lies entirely inside the polygon. We precompute the visibility graph of each polygon in $O(n^2)$ time~\cite{hershberger1989optimal}
such that later we can answer any visibility query in constant time.


\remove{  
\begin{claim}
\label{claim:interval}
For any compatible paths and for each polygon $P$, $Q$, the labels of any prefix correspond to a single interval $[i,j,cw]$ on the boundary of the polygon.  Furthermore, if the prefix does not contain all the labels, then the last label of the prefix corresponds to an endpoint of the interval.   
\end{claim}
\begin{proof}
We prove the claim for polygon $Q$. (The argument for $P$ is similar.)
If the prefix has length $n$, then it is straightforward to verify the claim. We thus assume that the prefix contains less than $n$ labels. 

The proof is by induction on $k$ with the base case $k=1$ being obvious.  So suppose the first $k-1$ labels correspond to 
an interval $I$
on the boundary of $Q$.  
Let $\ell$ be the last label in the prefix, so $q_\ell$ is the last vertex in the corresponding path in $Q$.
Suppose vertex $q_\ell$ is not contiguous with the interval $I$.
Let $u$ and $v$ be the predecessor and successor of $q_\ell$ in clockwise order around $Q$.  Then $u$ and $v$ do not belong to 
$I$ 
so the path must visit both of them later on.  But this is impossible for a crossing-free path, and compatible paths must be crossing-free. 

Thus vertex $q_\ell$ must appear just before or after 
$I$, 
forming a longer interval with $q_\ell$ as an endpoint of the interval.
%
\end{proof}
} 

Notation for our dynamic programming algorithm will be eased if we relabel so that polygon $P$ has labels $1, 2, \ldots, n$ in clockwise order.  For each label $i=1, \ldots, n$ and each length $t=1, \ldots, n$ let $I_Q(i,t,{\it cw})$ denote the interval on $Q$ of $t$ vertices that starts at $q_i$ and proceeds clockwise.  Define $I_Q(i,t,{\it ccw})$ similarly, but proceed counterclockwise from $q_i$.  Define $I_P(i,t,{\it cw})$ and $I_P(i,t,{\it ccw})$ similarly.  Note that  $I_P(i,t,{\it cw})$ goes from $p_i$ to $p_{i+t-1}$ (index addition modulo $n$).

We say that a path \emph{traverses} interval $I_Q(i,t,d)$ (where $d = {\it cw}$ or $\it ccw$), if the path is noncrossing, lies inside $Q$, visits exactly the vertices of $I_Q(i,t,d)$ and ends at $q_i$. 
We make a similar definition for a path to traverse an interval $I_P(i,t,d)$.

Our algorithm will solve subproblems $A(i,t,d_P, d_Q)$ where $i$ is a label from $1$ to $n$, $t$ is a length from $1$ to $n$, and $d_P$ and $d_Q$ take on the values ${\it cw}$ or $\it ccw$.  This subproblem records whether there is a path that traverses $I_Q(i,t,d_Q)$ and a path with the same sequence of labels that traverses $I_P(i,t,d_P)$.
If this is the case, we say that the two intervals are \emph{compatible}.
Observe that $P$ and $Q$ have compatible paths if and only if $A(i,n,d_P,d_Q)$ is true for some $i, d_P, d_Q$.

We initialize by setting $A(i,1,d_P,d_Q)$ to TRUE for all $i, d_P, d_Q$, and then solve subproblems in order of increasing $t$.
In order for intervals $I_Q(i,t+1,d_Q)$ and $I_P(i,t+1,d_P)$ to be compatible, the intervals of length $t$ formed by deleting the last label, $i$, must also be compatible, with an appropriate choice of direction ({\it cw} or {\it ccw}) on those intervals.  There are two choices in $P$ and two in $Q$.  We try all four combinations.  For a particular combination to `work' (i.e., yield compatible paths for the original length $t+1$ intervals), we need the last labels of the length $t$ intervals to match, and we need appropriate visibility edges in the polygons for the last edge of the paths.  

\begin{figure*}[h]
\centering
\includegraphics[width=.5\textwidth]{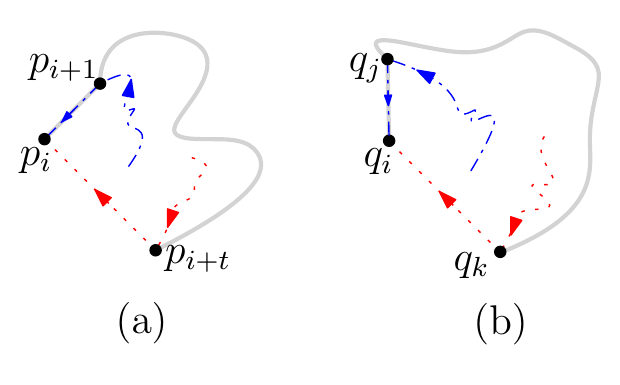}
\caption{Illustration for the dynamic programming algorithm.}
\label{fig:dp-3}
\end{figure*}

We give complete details for $A(i,t+1,{\it cw}, {\it cw})$. See Figure~\ref{fig:dp-3}.
(The other four possibilities are similar.)  
Deleting label $i$ from $I_P(i,t+1,{\it cw})$ gives $I_P(i+1,t, {\it cw})$ and $I_P(i+t, t, {\it ccw})$. 
Let $q_j$ be the vertex following $q_i$ in clockwise order around $Q$ and let $q_k$ be the other endpoint of $I_Q(i,t+1,{\it cw})$ (in practice, for efficiency, we would store $k$ with the subproblem).
Deleting label $i$ from $I_Q(i,t+1,{\it cw})$ gives $I_Q(j, t, {\it cw})$ and $I_Q(k, t, {\it ccw})$. The two possibilities for $P$ and $Q$ are shown by blue dash-dotted and red dotted lines in Figures~\ref{fig:dp-3}(a) and~(b), respectively. 
We set $A(i,t+1,{\it cw}, {\it cw})$ TRUE if any of the following four sets of conditions hold:
\begin{enumerate}
    \item Conditions for $I_P(i+1,t, {\it cw})$ and $I_Q(j, t, {\it cw})$:  $i+1=j$ and $A(i+1, t, {\it cw}, {\it cw})$.
    \item Conditions for $I_P(i+1,t, {\it cw})$ and $I_Q(k, t, {\it ccw})$: $i+1=k$ and $q_k$ sees $q_i$ in $Q$ and $A(i+1, t, {\it cw}, {\it ccw})$.  Note that the last edge of the path in $Q$ must be $(q_k, q_i)$ which is why we impose the visibility condition.
    \item Conditions for $I_P(i+t, t, {\it ccw})$ and $I_Q(j, t, {\it cw})$: $i+t=j$ and $p_{i+t}$ sees $p_i$ in $P$ and $A(i+t, t, {\it ccw}, {\it cw})$.
    \item Conditions for $I_P(i+t, t, {\it ccw})$ and $I_Q(k, t, {\it ccw})$: $i+t=k$ and $p_{i+t}$ sees $p_i$ in $P$ and $q_k$ sees $q_i$ in $Q$ and $A(i+t, t, {\it ccw}, {\it ccw})$.
\end{enumerate}

Since there are a quadratic number of subproblems, each taking constant time to solve, this algorithm runs in time $O(n^2)$, which proves:

\remove{ 
\attention{Anna: Sorry, but I could not quite understand the algorithm below, so I revised as above. In case there are advantages to the approach below, and someone can explain it better, please do.}

We are now ready to describe the dynamic programming algorithm. 
\attention{Anna: the notation $[i,j,cw]$ is above, but commented out.}
Let $\Chek[i,j,ccw]$ denote 
the result of the procedure to check whether there exist compatible paths such that the  path in $P$ 
starts at \suggestion{\sout{$p_j$}{$p_i$}}   and visits all the vertices with labels $[i,j,ccw]$. 
 We define $\Chek[i,j,cw]$ symmetrically. 


Using Claim~\ref{claim:interval}, it is straightforward to verify that  $\Chek[i,j,cw] = \Chek[i,j+1,cw]  \vee \Chek[j,i-1,ccw]$ if the following conditions hold.
\begin{enumerate}  
\item[-] $p_j$ is visible to both $p_{j+1}$ and $p_{i-1}$. Similarly, $q_j$ is visible to both $q_{j+1}$ and $q_{i-1}$. 
\question{Why do we need both visibility conditions? E.g., if I am extending the path by the edge $jj+1$, do I really need $p_j$ to be visible from $p_{i-1}$?}
\item[-] $q_{j+1}$ (similarly, $q_{i-1}$) is adjacent to either $q_i$ or $q_j$ on the boundary of $Q$. \question{Here the condition for $q_{j+1}$ is for $\Chek[i,j+1,cw]$ to be true, and the condition for $q_{i-1}$ is for $\Chek[j,i-1,ccw]$, correct?}
\end{enumerate}

We omit $\Chek[i,j+1,cw]$ (resp.,  $\Chek[j,i-1,ccw]$) if the corresponding visibility or adjacency conditions are not met. For example, $\Chek[j,i-1,ccw]$ is omitted in Figures~\ref{fig:dp-3}(a)--(b). If both checks get omitted, then $\Chek[i,j,ccw]$ returns `false'. We can define the recurrence relation for $\Chek[i,j,ccw]$ symmetrically. 
 
One can create a table of size $n\times n\times 2$ to store the solutions to the subproblems and implement the dynamic programming algorithm using table look-up. Note that the visibility and adjacency queries takes $O(1)$ time (with $O(n^2)$-time preprocessing). Since there are only $O(n^2)$ entries in the table, the running time is $O(n^2)$.
} 

\begin{theorem}
Given two $n$-vertex polygons, 
each with points labelled from $1$ to $n$ in some order,
one can find a pair of compatible paths or determine that none exist
in $O(n^2)$ time.
\end{theorem}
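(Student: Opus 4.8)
The plan is to establish correctness and the running-time bound of the dynamic program set up above; the theorem then follows immediately. The structural engine is Claim~\ref{claim:interval}: in any noncrossing path that stays inside a polygon and visits all of its vertices, every prefix occupies a contiguous boundary arc (an interval) whose last-visited vertex is one of the two endpoints of that arc. Read prefix by prefix, a pair of compatible paths is therefore nothing more than a common label sequence $\sigma$ such that, for each length $t$, the first $t$ labels form an interval ending at an endpoint in $P$ and in $Q$ \emph{simultaneously}. This is exactly what the subproblems $A(i,t,d_P,d_Q)$ encode, so it suffices to show the recurrence is necessary and sufficient, and then to count subproblems.

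First I would argue necessity. Suppose $P$ and $Q$ admit compatible paths with common label sequence $\sigma$, and let $i=\sigma_{t+1}$. By Claim~\ref{claim:interval} the prefix $\sigma_1,\dots,\sigma_{t+1}$ is an interval ending at the vertex labelled $i$ in each polygon; deleting $i$ leaves the length-$t$ prefix, again an interval ending at $\sigma_t$, which must be an endpoint of the shortened interval, and hence is reached either by the clockwise or the counterclockwise neighbour. Matching the second-to-last label $\sigma_t$ across the two polygons is precisely the index equality imposed in conditions (1)--(4), while each path's last edge being a chord inside its polygon gives the visibility requirements. By induction on $t$, with the trivial base case $t=1$, the entry $A(i,t+1,d_P,d_Q)$ is set TRUE.

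The sufficiency direction is where the real work lies, and I expect it to be the main obstacle. Unwinding a TRUE entry $A(i,n,d_P,d_Q)$ yields, for each polygon, a sequence of nested intervals and a candidate edge set; I must show the reconstructed label sequence is the same in $P$ and $Q$ and that each reconstructed path is genuinely noncrossing and interior. The common-sequence property follows by induction from the index equalities, which force the newly exposed endpoints of $P$ and $Q$ to carry the same label at every step, so the two reconstructions read off identical permutations (in particular the visited label \emph{sets} agree at every length). For noncrossing-ness I would prove the following invariant by induction on $t$: when a vertex $w$ is appended to a path $P_t$ that traverses an interval $I_t$ ending at the endpoint $e$, the new edge $(e,w)$ crosses no edge of $P_t$. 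If $w$ is the boundary-neighbour of $e$, then $(e,w)$ is a polygon edge and no interior chord of $P_t$ can cross it. Otherwise $w$ is the neighbour of the \emph{other} endpoint, the recurrence guarantees $e$ sees $w$, so the segment $ew$ lies inside the polygon and splits its interior into the pocket bounded by the arc $I_{t+1}=I_t\cup\{w\}$ and the complementary region. Every vertex of $P_t$ lies on the pocket arc, so each edge of $P_t$ other than the one incident to $e$ joins two pocket-arc vertices; a straight segment between them meets the chord $ew$ at most once, and a single crossing would either leave one endpoint stranded in the complementary region or force the edge outside the polygon, both impossible, while the edge incident to $e$ shares only the vertex $e$ with $(e,w)$. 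Hence the invariant is preserved and the reconstruction is a valid compatible pair.

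Finally I would bound the running time. The visibility graphs of $P$ and $Q$ are precomputed in $O(n^2)$ time so that every ``sees'' query is answered in $O(1)$. There are $n\cdot n\cdot 2\cdot 2=O(n^2)$ subproblems, and by storing the far endpoint $k$ with each interval, evaluating conditions (1)--(4) (together with the analogous cases for the remaining three direction pairs) costs $O(1)$ per subproblem. Compatible paths exist iff $A(i,n,d_P,d_Q)$ is TRUE for some $i,d_P,d_Q$, and a concrete pair is recovered by following back-pointers, so the total time is $O(n^2)$, which proves the theorem.
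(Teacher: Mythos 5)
Your proof is correct and follows essentially the same approach as the paper: the same subproblems $A(i,t,d_P,d_Q)$, the same four-case recurrence grounded in Claim~\ref{claim:interval}, and the same count of $O(n^2)$ constant-time subproblems after $O(n^2)$ visibility preprocessing. The only difference is that you make explicit the sufficiency direction (that unwinding a TRUE entry yields genuinely noncrossing interior paths, via the chord-separates-a-pocket argument), which the paper asserts without detail; this is a useful elaboration rather than a different route.
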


\subsubsection{A linear-time algorithm for convex point sets}

In this section we assume that the input is a pair of convex point sets $P, Q$, along with their convex hulls. 
 

Given a label $x$, we first define a \emph{greedy construction} to compute compatible paths starting at $x$. The output of the construction is an ordered sequence $\sigma_x$ of labels.  
Using Claim~\ref{claim:interval} we keep track of the intervals in $P$ and $Q$ corresponding to $\sigma_x$.
Initially $\sigma_x$ contains the label $x$.
Each subsequent step attempts to  add a new label to $\sigma_x$, maintaining intervals in $P$ and $Q$. 
Suppose the intervals corresponding to the current $\sigma_x$ are $I_P$ and $I_Q$ in $P$ and  $Q$ respectively.  Let $a$ and $b$ be the labels of the vertices just before and just after interval $I_P$ on the boundary of $P$.  Similarly, let $c$ and $d$ be the labels of the vertices just before and just after interval $I_Q$ on the boundary of $Q$.  If $\{a,b\} = \{c,d\}$, then we add $a$ and $b$ to $\sigma_x$ in arbitrary order.  Otherwise, if there is one label in common between the two sets, we add that label to $\sigma_x$.  Finally, if there are no common labels, then the construction ends.
%
%
Let $\sigma_x$ be a maximal sequence constructed as above.

\begin{lemma}~\label{lem:greedy}
$P$ and $Q$ have compatible paths starting at label $x$ if and only if $\sigma_x$ includes all $n$ labels.
\end{lemma}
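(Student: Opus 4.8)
The plan is to recast the greedy construction as a confluent set-growing process and then appeal to uniqueness of the reachable maximal set. First I would record the exact characterization of noncrossing paths on convex points that is implicit in Claim~\ref{claim:interval}: a permutation $\sigma=(\ell_1,\dots,\ell_n)$ of the labels realizes a noncrossing Hamiltonian path on a convex point set \emph{if and only if} every prefix $\{\ell_1,\dots,\ell_t\}$ is an interval set whose last element $\ell_t$ is an endpoint of that interval. The forward direction is exactly Claim~\ref{claim:interval}; for the converse I would observe that when $\ell_{t+1}$ extends the current arc, the edge $\ell_t\ell_{t+1}$ is an edge of the convex hull of $\{\ell_1,\dots,\ell_{t+1}\}$, so its relative interior lies outside $\mathrm{conv}(\{\ell_1,\dots,\ell_t\})$ (meeting it only at the vertex $\ell_t$); since every earlier edge lies inside that hull, no two edges cross. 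Applying this simultaneously to $P$ and $Q$ shows that a compatible path starting at $x$ is precisely a chain $\{x\}=S_0\subsetneq S_1\subsetneq\cdots\subsetneq S_n=\{1,\dots,n\}$ in which each $S_{t+1}$ is obtained from $S_t$ by appending a label that is a hull-neighbor of the arc $S_t$ in both $P$ and $Q$.

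Next I would make this set-growing process explicit. For a label set $S$ whose elements form an arc in both $P$ and $Q$, let $N_P(S)$ and $N_Q(S)$ denote the (at most two) hull-neighbors of the corresponding arcs; the labels that may legally be appended are exactly $N_P(S)\cap N_Q(S)$, which has size $0$, $1$, or $2$. The greedy construction appends all of them (in arbitrary order when there are two) and halts when the intersection is empty. This is an abstract rewriting system whose states are the reachable arc-sets and whose single step adds one label of $N_P(S)\cap N_Q(S)$; it terminates because $|S|$ strictly increases and is bounded by $n$.

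The crux is to establish local confluence. Here I would prove a strong diamond property: if two distinct labels $a,b$ are both appendable to $S$, then necessarily $N_P(S)=\{a,b\}=N_Q(S)$ (each neighbor set has at most two elements, forcing equality), so $a$ and $b$ occupy the two opposite ends of the arc in each of $P$ and $Q$; consequently appending $a$ leaves $b$ appendable to $S\cup\{a\}$ and vice versa. Thus $S\cup\{a\}$ and $S\cup\{b\}$ both reduce in one further step to the common set $S\cup\{a,b\}$. Termination together with local confluence (Newman's lemma) yields confluence, hence a \emph{unique} maximal set $S^\ast$ reachable from $\{x\}$, and every maximal chain from $\{x\}$—in particular the one built by the greedy construction—ends at $S^\ast$.

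Finally I would combine the pieces. By the characterization, a compatible path starting at $x$ exists iff $\{1,\dots,n\}$ is reachable from $\{x\}$; since $\{1,\dots,n\}$ is itself maximal, confluence forces $S^\ast=\{1,\dots,n\}$ in that case, so the greedy sequence $\sigma_x$ contains all $n$ labels. Conversely, if $\sigma_x$ contains all $n$ labels then, by the converse characterization applied to both $P$ and $Q$, the single sequence $\sigma_x$ simultaneously realizes noncrossing paths in $P$ and in $Q$, i.e.\ a pair of compatible paths. I expect the confluence step—verifying that appending both candidate labels in arbitrary order never sacrifices a full path that some other order would have reached—to be the main obstacle, whereas the hull-edge crossing argument and the bookkeeping of the neighbor sets $N_P(S),N_Q(S)$ are routine.
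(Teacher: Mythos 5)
Your proof is correct, and its two geometric ingredients coincide with the paper's: Claim~\ref{claim:interval} gives necessity of the prefix--interval property, and the observation that each new edge is an edge of the convex hull of the points visited so far (hence meets the previous edges only at the shared vertex) gives sufficiency --- the paper's backward direction is exactly your converse characterization. Where you genuinely diverge is in the ``only if'' direction. The paper disposes of it in one sentence (``we can build $\sigma_x$ in exactly the same order as $\sigma$''), which silently relies on the fact that the label \emph{set} produced by the greedy construction does not depend on the arbitrary choices it makes, nor on the fact that it may absorb labels in a different order than the hypothesized compatible sequence $\sigma$ does. You make this precise by recasting the greedy as an abstract rewriting system and proving the diamond property: if two labels are simultaneously appendable to an arc-set $S$, then they are forced to be the two opposite-end neighbours of $S$ in both $P$ and $Q$, so appending one preserves appendability of the other; Newman's lemma (or the diamond property directly) then yields a unique maximal reachable set, which must equal $\{1,\dots,n\}$ whenever the full set is reachable at all. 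This buys an airtight, self-contained justification of a step the paper treats informally; the cost is machinery heavier than strictly necessary, since the greedy as defined in the paper adds \emph{both} appendable labels within the same step, which already makes its reachable set deterministic by construction (your confluence argument is essentially the proof that this design choice is sound). Your diamond-property verification is exactly the content needed either way, and it checks out.
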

\begin{proof}
If $P$ and $Q$ have compatible paths with label sequence $\sigma$ starting at label $x$ then by Claim~\ref{claim:interval} every prefix of $\sigma$ corresponds to an interval in $P$ and in $Q$, and we can build $\sigma_x$ in exactly the same order as $\sigma$.

For the other direction, we claim to construct noncrossing paths in $P$ and $Q$ corresponding to $\sigma_x$.  Observe that when we add one or two labels to $\sigma_x$,  we can add the corresponding vertices to our paths because the point sets are convex, so every edge is allowable.  
Furthermore, the paths constructed in this way are noncrossing because the greedy construction of $\sigma_x$ always maintains intervals in $P$ and $Q$. Hence the new edges are outside the convex hull of the paths so far.
\end{proof}

Lemma~\ref{lem:greedy} allows us to find compatible paths (if they exist) in $O(n^2)$ time by trying each label $x$ as the initial label of the path.
In order to improve this to linear time, we first argue that 
when $\sigma_x$ does not provide compatible paths, then we need not try any of its other labels as the initial label.

\begin{lemma}~\label{lem:compatible}
If $\sigma_x$ has length less than $n$, then no label in $\sigma_x$ can be the starting 
label for compatible paths of $P$ and $Q$.
\end{lemma}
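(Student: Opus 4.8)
The plan is to show that the greedy construction $\sigma_x$ is, in a strong sense, canonical: the set of labels it produces depends only on the connected ``block'' that $x$ belongs to, not on $x$ itself. Concretely, I would prove that if $y$ is any label appearing in $\sigma_x$, then $\sigma_y$ visits exactly the same set of labels as $\sigma_x$ (though possibly in a different order), and in particular $|\sigma_y| = |\sigma_x| < n$. Combined with Lemma~\ref{lem:greedy}, which says compatible paths starting at $y$ exist iff $\sigma_y$ has length $n$, this immediately gives the contrapositive of what we want: if $\sigma_x$ is short, then every $y \in \sigma_x$ also has a short $\sigma_y$ and hence cannot start compatible paths.

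First I would characterize the \emph{reason} the greedy construction halts. It stops at some maximal $\sigma_x$ with corresponding intervals $I_P$ (with flanking labels $a,b$) and $I_Q$ (with flanking labels $c,d$) precisely when $\{a,b\} \cap \{c,d\} = \emptyset$. The key structural observation is that a short $\sigma_x$ exhibits a genuine incompatibility between $P$ and $Q$: the complementary interval $\{1,\dots,n\} \setminus \sigma_x$ is ``blocked'' because the four boundary-adjacent labels split $2$-and-$2$ rather than sharing a vertex. I want to argue this obstruction is symmetric and intrinsic to the label set $\sigma_x$, not to the starting point. The cleanest way is to show that for any $y \in \sigma_x$, running the greedy construction from $y$ can only reach labels inside $\sigma_x$: an inductive invariant maintaining that the running interval from $y$ stays within the interval $\sigma_x$ of $P$ and within the interval $\sigma_x$ of $Q$, using Claim~\ref{claim:interval} to guarantee prefixes are intervals and using the fact that the flanking sets of the full $\sigma_x$ are disjoint to block any escape.

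The key steps, in order, would be: (1) record that at termination the flanking label-pairs $\{a,b\}$ in $P$ and $\{c,d\}$ in $Q$ are disjoint, so $\sigma_x$ forms an interval in \emph{both} $P$ and $Q$ whose two boundary neighbors on each side are ``mismatched''; (2) fix an arbitrary $y \in \sigma_x$ and run the greedy construction from $y$, maintaining the invariant that its current interval is contained in the $\sigma_x$-interval of $P$ and in the $\sigma_x$-interval of $Q$ simultaneously; (3) show the invariant is preserved at each greedy step, because growing past the $\sigma_x$-boundary would require a label whose $P$-neighbor and $Q$-neighbor agree, contradicting the $2$-$2$ split from step (1); (4) conclude $\sigma_y \subseteq \sigma_x$, hence $|\sigma_y| \le |\sigma_x| < n$, so by Lemma~\ref{lem:greedy} no compatible paths start at $y$.

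I expect the main obstacle to be step (3): carefully justifying that the greedy process from $y$ can never ``break out'' of the block $\sigma_x$. The subtlety is that $y$'s construction may add labels in a different order and may merge two vertices at once (the $\{a,b\}=\{c,d\}$ case), so I must verify that whenever the current interval reaches an endpoint of the $\sigma_x$-interval, the unique candidate extension on the $P$-side disagrees with the candidate on the $Q$-side — which is exactly the disjointness of flanking pairs established for the \emph{whole} block, and I would need to confirm this disjointness persists as a boundary condition throughout, rather than only holding for the single terminal interval $I_P, I_Q$ reached when starting from $x$. Making that boundary argument airtight, likely by appealing to Claim~\ref{claim:interval} to pin down that each side of the block has a fixed pair of exterior neighbors independent of traversal order, is where the real work lies.
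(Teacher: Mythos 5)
Your proof goes through, but it reaches the lemma by a somewhat different route than the paper. The paper argues directly about a hypothetical pair of compatible paths with label sequence $s_y$ starting at some $y$ in $\sigma_x$: it takes the first label $z$ of $s_y$ not in $\sigma_x$, uses Claim~\ref{claim:interval} to see that the prefix of $s_y$ before $z$ forms intervals $I'_P\subseteq I_P$ and $I'_Q\subseteq I_Q$, and observes that $z$, lying outside $I_P$ and $I_Q$ yet adjacent to $I'_P$ and $I'_Q$, must flank $I_P$ in $P$ and $I_Q$ in $Q$ --- so the greedy construction would have added $z$ to $\sigma_x$, a contradiction. You instead prove the closure property $\sigma_y\subseteq\sigma_x$ for every $y$ in $\sigma_x$ and then invoke Lemma~\ref{lem:greedy}. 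The decisive combinatorial fact is the same in both arguments (a label outside the block that is adjacent, in both $P$ and $Q$, to a sub-interval of the block must be a common flanking label of $I_P$ and $I_Q$, and no such label exists at termination --- indeed, disjointness of the flanking pairs \emph{is} the termination condition, so your step (1) is immediate); you apply it to the greedy run from $y$, the paper applies it to the compatible path itself. The worry you flag in step (3) resolves cleanly: the disjointness is only ever needed for the fixed terminal intervals $I_P,I_Q$ of $\sigma_x$, because an escape label $z\notin\sigma_x$ adjacent to $J_P\subseteq I_P$ must be adjacent to $J_P$ at an endpoint that is also an endpoint of $I_P$ (otherwise the relevant neighbour of that endpoint would still lie inside $I_P$, contradicting $z\notin I_P$), hence $z\in\{a,b\}$, and symmetrically $z\in\{c,d\}$. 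One correction to your framing: the opening claim that $\sigma_y$ visits \emph{exactly} the same label set as $\sigma_x$ is stronger than what your steps establish and is false in general --- $y$ may have been absorbed into $\sigma_x$ via different hull-neighbours in $P$ and in $Q$, in which case the greedy run restarted at $y$ can stall immediately and give $\sigma_y\subsetneq\sigma_x$. Only the containment $\sigma_y\subseteq\sigma_x$ holds, but that is all your step (4) uses, so the proof is unaffected. On balance your version proves a slightly more informative structural statement (the greedy block is closed under restarting) at the cost of needing Lemma~\ref{lem:greedy} as an extra ingredient, whereas the paper's argument is shorter and self-contained.
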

\begin{proof}
Suppose that there are compatible paths with label sequence $s_y$ starting at a label $y$ in $\sigma_x$. 
Let $z$ be the first label that appears in $s_y$ but not in $\sigma_x$. Let $I_P$ and $I_Q$ be the intervals corresponding to $\sigma_x$ in $P$ and $Q$ respectively.
By Claim~\ref{claim:interval} the prefix of $s_y$ before $z$ corresponds to intervals, say
$I'_P$ and $I'_Q$ on $P$ and $Q$, respectively.  
Then $I'_P \subseteq I_P$ and $I'_Q \subseteq I_Q$ 
(by our assumption that $z$ is the first label of $s_y$ not in $\sigma_x$). 
Since the vertex with label $z$ must be adjacent to $I'_P$ on the boundary of $P$ and to $I'_Q$ on the boundary of $Q$, and $z$ does not appear in $\sigma_x$, therefore 
the vertex with label $z$ must be adjacent to $I_P$ on the boundary of $P$ and to $I_Q$ on the boundary of $Q$.  But then our construction would add label $z$ to $\sigma_x$.
\end{proof}

\remove{
\begin{lemma}~\label{lem:greedy}  
The longest compatible path starting at any vertex in $\sigma_x$ has length $|\sigma_x|$.
\end{lemma}
\begin{proof}
First observe that $\sigma_x$ provides compatible paths.  Here we use the fact that $P$ and $Q$ are convex. 
\attention{Maybe say more.}

We prove the rest of the Claim by contradiction.
Suppose that there is a longer compatible path $s_y$ starting at a label $y$ in $\sigma_x$. Let $z$ be the first label that appears in $s_y$ but not in $\sigma_x$. Let $I_P$ and $I_Q$ be the intervals corresponding to $\sigma_x$ in $P$ and $Q$ respectively. 
Let $I'_P$ and $I'_Q$ be the intervals corresponding to the prefix of $s_y$ before $z$.  Then $I'_P \subseteq I_P$ and $I'_Q \subseteq I_Q$.  Since the vertex with label $z$ must be adjacent to $I'_P$ on the boundary of $P$ and to $I'_Q$ on the boundary of $Q$, and $z$ does not appear in $\sigma_x$, therefore 
the vertex with label $z$ must be adjacent to $I_P$ on the boundary of $P$ and to $I_Q$ on the boundary of $Q$.  But then our construction would add label $z$ to $\sigma_x$.
\end{proof}
}

\remove{
\begin{lemma}~\label{lem:greedy}
The sequence $\sigma_x$ determines the 
\suggestion{longest} 
compatible paths (not necessarily spanning) among all the compatible paths that start with  label $x$.
\end{lemma}
\begin{proof}
We employ an induction on $n$, i.e., the number of points in each point set. The case when $n\le 3$ is straightforward. We thus assume that $n>3$, and the claim holds for all $n'$ where $n'<n$. Consider now the case when each point set has $n$ vertices. If $\sigma_x$ contains only one label, then the claim is straightforward to verify from  the construction. Otherwise, let $x,y$ be the first two labels in $\sigma_x$. By Claim~\ref{claim:interval}, $x,y$ must be adjacent on the convex hull of each point set. Let $v,w$ be the other labels adjacent to $x$ on the convex hull of $P$ and $Q$, respectively. 

If $v\not=w$, then we can apply an induction on $P\setminus{p_x}$ and $Q\setminus{q_x}$ to obtain the largest sequence $\sigma_y$. Since $y$ is the only choice from $x$ to construct compatible paths, $x,\sigma'_y$ would correspond to a pair of largest possible compatible paths starting at $x$. 

If $v=w$, then  we can apply an induction on $P\setminus{p_x}$ and $Q\setminus{q_x}$ to obtain the largest sequence $\sigma_y$. However, now it would create a problem if $x,\sigma_v$ is larger than $x,\sigma_y$. It now suffices to show that there always exists a sequence $x,\sigma_y$ with the same number of labels as that of $x,\sigma_v$. To prove this we modify  $x,\sigma_v$ as follows: If the $y$ is an endpoint of $x,\sigma_v$, then we can simply delete the edge $(x,v)$ and add the edge $(x,y)$. Otherwise, let $x,v,\ldots,v',y,y'\ldots,z$ be the sequence $x,\sigma_v$. In this case $v,\ldots,v'$ must appear consecutively in this order on the convex hull of each point set. Since the  point sets are convex, the sequence $x,y,v,\ldots,v',y'\ldots,z$ would be crossing free, and the required sequence $x,\sigma_y$. {\color{red} Need Some Polishing.)}
\end{proof}


\begin{lemma}~\label{lem:compatible}
If $\sigma_k$ does not contain all the labels, then no label in $\sigma_k$ can be the starting label for the compatible paths of $P$ and $Q$. 
\end{lemma}
\begin{proof}
Without loss of generality assume that $\sigma_k$ covers the interval $[i,j,cw]$ in $P$. Suppose for a contradiction that there exists a label $r\in [i,j,cw]$ such that $r\not=k$ and there exit compatible paths with starting label $r$. In other words, $\sigma_r$ contains all the labels.  

Let $L$ be the directed path determined by $\sigma_r$ in $P$. Let $(p_v,p_w)$ be the first edge on $L$, directed from $p_v$ to $p_w$, such that $w\not\in [i,j,cw]$. 


Assume first that $p_r$ coincides with $p_v$. If $r\not \in \{i,j\}$, then neither $p_i$ nor $p_j$ is visited, and thus the directed path $p_r (=p_v),p_w$ cannot be extended to a crossing-free path spanning all the vertices. If $r\in \{i,j\}$, then the vertices labelled $r$ and $w$ must appear consecutively on the convex hull of both $P$ and $Q$ (Claim~\ref{claim:interval}). Therefore, by the construction of $\sigma_k$, $w$ must belong to $\sigma_k$, a contradiction.  

Assume now that $r\not =v$. Without loss of generality assume that $r$ lies to the left of $p_vp_w$. In this case, the path $p_r,\ldots, p_v,p_w$ must cover all the vertices in $[v,w,cw]$ (Claim~\ref{claim:interval}), and thus also $p_j$. Note that $p_i$ does not belong to $p_r,\ldots, p_v,p_w$. In addition, since $w$ is the first label  (on $L$) outside $[i,j,cw]$, all the labels in $[v,w,cw] \setminus \{w\}$, must belong to $[v,j,cw]$. Therefore, the vertices labelled $j$ and $w$ must appear consecutively on the convex hull of both $P$ and $Q$. Therefore, by the construction of $\sigma_k$, $w$ must belong $\sigma_k$, a contradiction.  
\end{proof}
} 

We will use 
Lemma~\ref{lem:compatible} to show that we can eliminate some labels entirely when $\sigma_x$ is found to have length less than $n$. 
Suppose $\sigma_x$ does not include all labels. 
Let $I_P$ and $I_Q$ be the intervals on $P$ and $Q$, respectively, corresponding to the set of labels of $\sigma_x$. 
Let $a$ and $b$ be the labels that appear at the endpoints of $I_P$.

Suppose $P$ and $Q$ have compatible paths (of length $n$) with label sequence $\sigma$.  
Then by Lemma~\ref{lem:greedy} the initial and final label of $\sigma$ lie outside of $\sigma_x$.
Furthermore, by Claim~\ref{claim:consecutive}, the set of labels of $\sigma_x$ must appear consecutively and in the same order around $P$ and around $Q$ (either clockwise or counterclockwise). 
Our algorithm checks whether $I_P$ and $I_Q$ have the same ordered lists of labels.  If not, then there are no compatible paths.

So suppose that $I_P$ and $I_Q$ have the same ordered lists of labels.  Then the endpoints of $I_Q$ must have labels $a$ and $b$.
We will now reduce to a smaller problem by discarding all internal vertices of $I_P$ and $I_Q$.
Let $P'$ and $Q'$ be the point sets formed from $P$ and $Q$, respectively, by deleting the vertices with labels in $\sigma_x - \{a,b\}$.   

\begin{lemma}
\label{lem:reduce}  Suppose $z$ is a label appearing in $P'$.  
$P$ and $Q$ have compatible paths with initial label $z$  if and only if $P'$ and $Q'$ have compatible paths with initial label $z$. 
\end{lemma}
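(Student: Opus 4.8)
The plan is to prove the equivalence through a \emph{contraction}/\emph{expansion} correspondence that deletes, respectively re-inserts, the internal vertices $v_1,\dots,v_m$ of the interval $I$ (the labels in $\sigma_x-\{a,b\}$), using crucially that $I_P$ and $I_Q$ carry the \emph{same ordered list of labels}, with $a$ and $b$ at corresponding ends and the same internal order $v_1,\dots,v_m$ from $a$ to $b$. Throughout I would work with the arc-growing view of Claim~\ref{claim:interval}: on convex point sets a noncrossing spanning path is exactly a label sequence whose every prefix occupies a contiguous arc of the hull, each new label extending one of the two ends of the current arc. I would first dispose of the boundary case $z\in\{a,b\}$ and then treat the main case $z\notin\{a,b\}$ (equivalently, $z$ lies outside $I$), where the bijection lives.

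For the main case, consider first the forward direction and suppose $(P,Q)$ admit a compatible spanning path with label sequence $\sigma$ starting at $z$. By Lemma~\ref{lem:compatible} both endpoints of $\sigma$ lie outside $\sigma_x$, hence outside $I$; consequently the set of unvisited vertices is never contained in the interior of $I$, so the vertices of $I$ are absorbed monotonically from a single end, in the order $a,v_1,\dots,v_m,b$ (or its reverse), consistently in $P$ and $Q$ by Claim~\ref{claim:consecutive}. I would then form $\sigma'$ by deleting $v_1,\dots,v_m$ from $\sigma$. The key point is that $\sigma'$ is still arc-growing on \emph{both} $P'$ and $Q'$: each retained label was added at an end of the old arc, and deleting the $v_i$'s merely collapses the run of extensions $a\to v_1\to\dots\to v_m\to b$ into the single extension $a\to b$, which is legal because $a$ and $b$ are adjacent on the hulls of $P'$ and $Q'$; every other extension is untouched. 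Since $\sigma'$ visits exactly the labels of $P'$ and starts at $z$, this yields compatible paths on $(P',Q')$.

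For the reverse direction, suppose $(P',Q')$ admit a compatible spanning path $\sigma'$ from $z$. In both $P'$ and $Q'$ the labels $a,b$ are adjacent, so one of them, say $b$, is added second; just before $b$ is added the other label $a$ sits at the $b$-side end of the current arc, simultaneously in $P'$ and in $Q'$ (the sequence $\sigma'$ is common to both). I would insert $v_1,\dots,v_m$ into $\sigma'$ at exactly this position, in the common $a$-to-$b$ order, turning the step $a\to b$ back into $a\to v_1\to\dots\to v_m\to b$. Because $v_1,\dots,v_m$ are precisely the hull vertices between $a$ and $b$ in $P$ and, in the same order, in $Q$, this re-inserted run is a legal sequence of end-extensions on both $P$ and $Q$, while all other steps of $\sigma'$ remain legal; the result $\sigma$ is a compatible spanning path of $(P,Q)$ starting at $z$.

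It remains to handle $z\in\{a,b\}$, where I claim both sides are false. On the $(P,Q)$ side, $a,b\in\sigma_x$ and $\sigma_x$ has length less than $n$, so Lemma~\ref{lem:compatible} forbids any compatible path starting at $a$ or $b$. On the $(P',Q')$ side I would invoke Lemma~\ref{lem:greedy} via the greedy construction from $a$: the maximality (stuck condition) of $\sigma_x$ says the two vertices flanking $I$ in $P$ share no label with the two flanking $I$ in $Q$, so the only hull-neighbor of $a$ common to $P'$ and $Q'$ is $b$, forcing the greedy sequence to $(a,b)$, after which it halts again by the stuck condition; since $|P'|\ge 3$, Lemma~\ref{lem:greedy} rules out a compatible path from $a$, and symmetrically from $b$. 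I expect the main obstacle to be the bookkeeping in the two constructions of the main case, specifically verifying that a \emph{single} deletion or insertion keeps the path noncrossing \emph{simultaneously} on $P$ and $Q$, which is exactly where the hypothesis that $I_P$ and $I_Q$ have identical ordered label lists is essential.
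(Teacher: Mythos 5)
Your proof is correct and follows essentially the same contraction/expansion strategy as the paper's: delete the internal labels of the common interval for one direction, and re-insert them as a consecutive block adjacent to $a$ (the paper) or $b$ (you) for the other, with the identical-ordered-lists hypothesis carrying the re-insertion step. The remaining differences are cosmetic --- the paper justifies deletion by a local empty-triangle argument rather than your arc-growing invariant, and it does not need your separate treatment of $z\in\{a,b\}$ since its insertion argument already covers that case; one small caveat: the $v_i$ need not form a consecutive ``run'' of steps in $\sigma$ (Claim~\ref{claim:consecutive} only fixes their relative, not contiguous, order), although your deletion argument goes through regardless.
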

\begin{proof}
If $P$ and $Q$ have compatible paths (of length $n$) with initial label $z$, then we claim that deleting from those paths the vertices with labels in $\sigma_x - \{a,b\}$ yields compatible paths of $P'$ and $Q'$ with initial label $z$.  
It suffices to show that if we delete one vertex from a noncrossing path on points in convex position then the resulting path is still noncrossing.  The two edges incident to the point to be deleted form a triangle, and the new path will use the third side of the triangle.  Since the points are in convex position, the triangle is empty of other points, and so the new edge does not cross any other edge of the path.

For the other direction, suppose that $\sigma'$ is a label sequence of compatible paths of $P'$ and $Q'$ with initial label $z$.  Suppose without loss of generality that label $a$ comes before label $b$ in $\sigma'$.  Construct a sequence $\sigma$ by adding the labels of $\sigma_x - \{a,b\}$ after $a$ in $\sigma'$ in the order that they appear in $I_P$, e.g., see Figure~\ref{bottomup}. 
It remains to show that the corresponding paths in $P$ and $Q$ are noncrossing.  
This follows from the fact that in both $P$ and $Q$ the added points appear consecutively around the convex hull following the point with label $a$.
\end{proof}
\begin{figure*}[h]
\centering
\includegraphics[width=\textwidth]{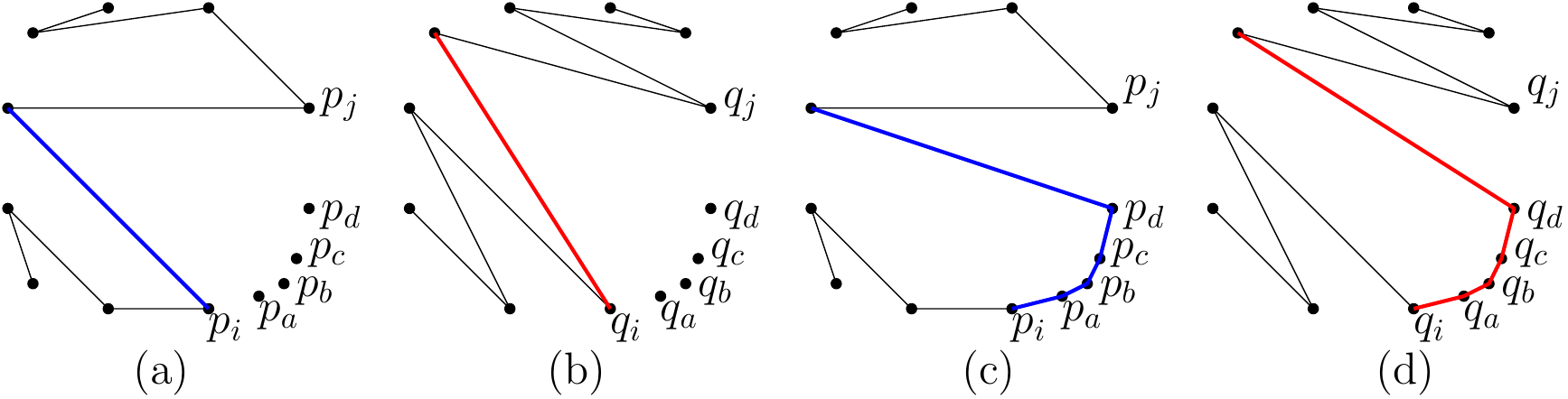}
\caption{(a)--(b)  Compatible paths on the point sets $P\setminus \{p_a,p_b,p_c,p_d\}$ and $Q\setminus \{q_a,q_b,q_c,q_d\}$. (c)--(d) Insertion of the deleted points keeps the paths compatible. 
}
\label{bottomup}
\end{figure*}
\remove{ 
Let $Q_k$ be all the points in $Q$ with labels in $\sigma_k$, and let $\lambda(Q,\sigma_k,cw)$ be the clockwise order of the points  in $Q_k$ on the convex-hull of $Q$. Similarly, define $\lambda(Q,\sigma_k,ccw)$. We now have the following lemma.

\begin{lemma}
\label{lem:ordering}
\attention{Anna: This lemma will vanish with above changes.}
Assume that $\sigma_k$ does not contain all the labels, but there exist compatible paths spanning $P$ and $Q$. Then the path $L_Q$ that spans $Q$ must contain the points of $Q_k$ in the same order as in $\lambda(Q,\sigma_k,cw)$ or in $\lambda(Q,\sigma_k,ccw)$. Note that the points of $Q_k$ may appear consecutively on the $L_Q$, i.e., they may appear as a subsequence.  The same condition holds also for the path $L_P$.
 
The paths $L_P$ and $L_Q$ can be modified such that they remain compatible, but all the points in $Q_k$ and $P_k$, except for one extreme point in each set, appear consecutively on the corresponding paths. {\color{red} Need figure - may be replace Fig~\ref{bottomup}.}
\end{lemma}
\begin{proof}
By Lemma~\ref{lem:compatible}, none of the labels in $\sigma_k$ can be a starting label for compatible paths. Assume that $\sigma_r$, where $r\not=k$, contains all the labels, i.e., determines compatible paths. Let $L_Q$ be the path spanning $Q$. 

Note that the lemma is straightforward to verify when $\sigma_k$ contains at most two labels. We may thus assume for a contradiction that $L_Q$ contains three points $\{q_a,q_b,q_c\}\subseteq Q_k$ in this order, which is different than the order of both $\lambda(Q,\sigma_k,cw)$ and $\lambda(Q,\sigma_k,ccw)$. The candidate $\lambda$-orders are $(q_b,q_a,q_c), (q_b,q_c,q_a)$ and their reverse orders. Since $L$ is crossing free, we only need to consider $(q_b,q_c,q_a)$ and its reverse order. In both of these cases, we must have the endpoint of $L_Q$ in $Q_k$, which contradicts that there do not exist compatible paths that start with a label from $\sigma_k$. 

We now modify $L_P$ and $L_Q$ such that the points in $Q_k$ and $P_k$ appear consecutively on the corresponding paths, as follows. Without loss of generality assume that $L_Q$ respects the order $\lambda(Q,\sigma_k,cw) = (q_a,\ldots,q_b,\ldots,q_c)$, where $q_a$ is the first point of $Q_k$ and $q_b,q_c$ are the last two points of $Q_k$. Then after visiting $q_a$, we visit all the points of   $Q_k$ upto $q_b$ and then continue visiting the points we skipped before $q_b$ in the order they appeared in $L_Q$ (Figure~\ref{?} {\color{red} Need figure}). Since the order of labels on $L_P$ is the same as that of $L_Q$, the same modification also works for $L_P$.
\end{proof}
} 

We can now prove the main result of this section.

\begin{theorem}
Given two sets of $n$ points in convex position (along with their convex hulls)
each with points labelled from $1$ to $n$,
one can find a pair of compatible paths or determine that none exist
in linear time.
\end{theorem}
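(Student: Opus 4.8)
The plan is to turn the greedy construction into an iterative algorithm that, upon each failure, both eliminates candidate starting labels and shrinks the instance, and then to charge the total running time to the number of points deleted plus the number of starting labels tried. Concretely, I would preprocess the two convex hulls into circular doubly linked lists storing the labels in hull order, together with an array mapping each label to its node; since the hulls are given, this takes $O(n)$ time and lets a single greedy step (reading the two outside neighbours of each current interval and comparing the resulting $2$-element label sets) run in $O(1)$ per added label. I would also maintain a set $C$ of labels still viable as a starting label, initialized to all $n$ labels.

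The main loop picks any $x\in C$ and runs the greedy construction to obtain $\sigma_x$ together with its intervals $I_P,I_Q$. If $|\sigma_x|=n$, then by Lemma~\ref{lem:greedy} we have compatible paths and output them. Otherwise $|\sigma_x|<n$, and I would do three things. First, by Lemma~\ref{lem:compatible} (applied to the current, possibly already reduced, instance) no label of $\sigma_x$ can start compatible paths, and since reversing a path swaps its endpoints, none can end them either; so I delete all of $\sigma_x$ from $C$, and in particular $x$ leaves $C$. Second, I test whether $I_P$ and $I_Q$ are identical as ordered lists: because any compatible path must begin and end outside $\sigma_x$, Claim~\ref{claim:consecutive} forces these two orders to coincide whenever compatible paths exist, so a mismatch lets me report that none exist. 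Third, if the orders agree, I invoke Lemma~\ref{lem:reduce}: splice the $|\sigma_x|-2$ interior vertices of the interval out of both linked lists (leaving the endpoints $a,b$ adjacent in each), obtaining a strictly smaller equivalent instance, and continue the loop. If $C$ ever becomes empty, I report that no compatible paths exist, since every label has been ruled out as a start.

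Correctness follows by assembling the earlier results: each iteration either terminates correctly (a full $\sigma_x$ via Lemma~\ref{lem:greedy}, an order mismatch via Claim~\ref{claim:consecutive}, or an empty $C$), or replaces the instance by an equivalent smaller one via Lemma~\ref{lem:reduce}. Because each reduction is an ``if and only if'' on compatible paths whose starting label lies in the reduced set, these reductions compose, so compatible paths found in the final instance lift back to the original. To output an actual path I would record the deleted intervals and, walking the reductions in reverse, reinsert each interval's interior between its two endpoints in hull order exactly as in the proof of Lemma~\ref{lem:reduce}, which is an additional $O(n)$ work.

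The crux is the running-time analysis, and this is the step I expect to be the real obstacle. A single greedy run costs $O(|\sigma_x|)$, and naively summing these is only $O(n^2)$; the difficulty is that short failing runs (lengths $1$ or $2$ delete no interior points) must still be controlled. I would charge as follows: every run removes its starting label $x$ from $C$ and never revisits it, so the number of runs is at most $n$; meanwhile every run with $|\sigma_x|\ge 2$ deletes exactly $|\sigma_x|-2$ points from the instance, and runs with $|\sigma_x|=1$ delete none. Since the total number of deleted points is at most $n$, this gives $\sum_x |\sigma_x|\le(\text{points deleted})+3\cdot(\text{number of runs})\le n+3n=O(n)$, so the greedy work, together with the $O(n)$ preprocessing and reconstruction, is linear overall. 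The delicate points to verify are that the linked-list splicing keeps every later greedy step $O(1)$ even though earlier endpoints persist, and that Lemma~\ref{lem:reduce} may legitimately be applied to an already reduced instance; both hold because a reduced convex point set is again a convex point set with the surviving labels in their original cyclic order.
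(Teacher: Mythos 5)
Your proposal is correct and follows essentially the same route as the paper: the same greedy construction, the same use of Lemma~\ref{lem:compatible} to discard all labels of a failed $\sigma_x$ as starting labels, the same order-check via Claim~\ref{claim:consecutive} followed by the Lemma~\ref{lem:reduce} reduction, and the same charging argument (each run retires its starting label, and runs of length $\ell\ge 3$ pay for themselves by deleting $\ell-2$ points). Your explicit candidate set $C$ and linked-list bookkeeping just make concrete what the paper leaves implicit.
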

\begin{proof}
The algorithm is as described above.  At each stage we try some label $x$ to be the initial label of compatible paths, by computing 
$\sigma_x$ using the greedy construction.  If $\sigma_x$ has length $n$ we are done.  Otherwise if $\sigma_x$ has length 1 or 2, then we have ruled out the labels in $\sigma_x$ as initial labels.  Finally, if $\sigma_x$ has length less than $n$ and at least 3 then we 
test whether the intervals corresponding to $\sigma_x$ in $P$ and $Q$ have the same ordering, and if they do, then we
apply the reduction described above and recurse on the smaller instance as justified by Lemma~\ref{lem:reduce}. 

The running time of the algorithm is determined by the length of all the $\sigma$-sequences we compute. Define a $\sigma$-sequence to be `long' or `short' depending on whether it contains at least three labels or not. Every long sequence of length $\ell$ reduces the number of points by $(\ell-2)$ and requires $O(\ell)$ time.  Thus, long sequences take $O(n)$ time in total. Computing any short sequence takes $O(1)$ time. Since for each label, we compute $\sigma$ at most once, the short sequences also take  $O(n)$  time in total.
\end{proof}

\remove{ 
By Lemma~\ref{lem:greedy}, if the point sets admit compatible paths with the starting label $x$, then $\sigma_x$ would contain all the labels and determine such compatible paths. Therefore, it suffices to repeatedly check $\sigma_k$ for every label, which may take $O(n^2)$ time.

To reduce the running time we use Lemmas~\ref{lem:compatible}--\ref{lem:ordering}.  By Lemma~\ref{lem:compatible}, if the point sets do not admit compatible paths with the starting label $k$, then none of the labels in $\sigma_k$ can be starting labels. Therefore, every time we compute some $\sigma_k$, for some label $k$, we can check whether  $\lambda(Q,\sigma_k,cw)$ or its reverse coincides with  either $\lambda(P,\sigma_k,ccw)$ or its reverse. If not, then by Lemma~\ref{lem:ordering}, the point sets do not admit compatible paths. Otherwise, we delete all the points except for the extreme points of $Q_k$ and $P_k$,  and repeat the process of computing $\sigma$ for a new label.  

The reason we keep the extreme points of $Q_k$ and $P_k$ is because they control the `compatibility' ordering of the paths. Any compatible paths that we compute can be extended to add the deleted points in the order determined by those extreme points (Figure~\ref{bottomup} {\color{red} Add convexity arguments in the Appendix}). Observe that the construction will only yield  compatible paths that contain the points $P_x$ and $Q_x$ consecutively on the convex hull of $P$ and $Q$, respectively. However, this would suffice since by Lemma~\ref{lem:ordering}, every pair of compatible paths can be modified to satisfy this property.  

The running time of the algorithm is determined by the length of all $\sigma$-sequences we compute. Define a $\sigma$-sequence to be `long' or `short' depending on whether it contains at least three labels or not. Every long sequence of length $\ell$ reduces the number of points by $(\ell-2)$ and requires $O(\ell)$ time.  Therefore, large sequences take $O(n)$ time in total. Computing every short sequences takes $O(1)$ time. Since for each label, we compute $\sigma$ at most once, the short sequences also take  $O(n)$  time in total.
} 

\section{Monotone Paths in General Point Sets}

In this section we examine arbitrary point sets in general position, but we restrict the type of path. 

Let $P$ be a point set in general position. An ordering $\sigma$ of the points of $P$ is called \emph{monotone} if there exists some line $\ell$ such that the orthogonal projection of the points on $\ell$ yields the order $\sigma$. A \emph{monotone path} is a 
path that corresponds to a  monotone ordering.  
Note that every monotone path is noncrossing.

Two points sets $P$ and $Q$ each labelled $1, 2, \ldots, n$ have \emph{compatible monotone paths} if there is an ordering of the labels that corresponds to a monotone path in $P$ and a monotone path in $Q$.   
To decide whether compatible monotone paths exist, we can enumerate all the monotone orderings of $P$, and for each of them check in linear time whether it determines a monotone path in $Q$.

A method for enumerating all the monotone orderings of a point set $P$ was developed by Goodman and Pollack:  

\begin{theorem} [Goodman and Pollack~\cite{goodman1980combinatorial}]
Let $\ell_0$ be a line 
not orthogonal to any line determined by two points of $P$. 
Starting with $\ell=\ell_0$, 
rotate the line $\ell$ through $360^\circ$ counter-clockwise about a fixed point.
Projecting the points onto $\ell$ as it rotates 
gives all the possible monotone orderings of $P$.
There are $2 {n \choose 2}= n(n-1)$ orderings, and each successive ordering differs from the previous one 
by a swap of two elements adjacent in the ordering.
\label{theorem:ordering}
\end{theorem}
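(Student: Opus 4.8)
The plan is to track how the projection order changes as a continuous function of the direction of $\ell$, and to identify the directions at which two points tie. First I would parametrize the oriented direction of $\ell$ by an angle $\theta$, with unit direction vector $u(\theta)=(\cos\theta,\sin\theta)$, so that the projection of a point $p=(x,y)$ onto $\ell$ is the scalar $f_p(\theta)=x\cos\theta+y\sin\theta$. For a fixed $\theta$ at which the values $f_p(\theta)$ are pairwise distinct, sorting the points by $f_p(\theta)$ yields exactly the monotone ordering obtained by projecting onto $\ell$. Two points $p_i,p_j$ have equal projection precisely when $(p_i-p_j)\cdot u(\theta)=0$, i.e.\ when $u(\theta)\perp(p_i-p_j)$, equivalently when $\ell$ is orthogonal to the line through $p_i$ and $p_j$. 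This is the event at which $i$ and $j$ can exchange places in the order.

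Next I would count and locate these events. For each of the $\binom{n}{2}$ pairs, the equation $(p_i-p_j)\cdot u(\theta)=0$ has exactly two solutions in $[0,2\pi)$, namely an antipodal pair $\theta^\ast,\theta^\ast+\pi$. The hypothesis that $\ell_0$ is not orthogonal to any connecting line guarantees that the starting direction is not one of these events, so the initial order is well defined. Under the genericity assumption that the $\binom{n}{2}$ connecting lines have pairwise distinct directions, all $2\binom{n}{2}=n(n-1)$ tie-directions are distinct, so as $\theta$ sweeps $[0,2\pi)$ the events occur one at a time. These $n(n-1)$ directions cut the circle of directions into $n(n-1)$ open arcs, and on each arc the sign of every $(p_i-p_j)\cdot u(\theta)$ is constant, so the order is constant.

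The remaining steps establish the three assertions of the theorem. For the \emph{swap of adjacent elements}: at a tie-direction $\theta^\ast$ of the pair $\{i,j\}$ we have $f_{p_i}(\theta^\ast)=f_{p_j}(\theta^\ast)$, while by genericity every other $f_{p_k}(\theta^\ast)$ is bounded away from this common value; hence in a neighborhood of $\theta^\ast$ the points $i$ and $j$ are consecutive in the order and simply exchange as $\theta$ crosses $\theta^\ast$, so consecutive arcs differ by a single adjacent transposition. For the \emph{count}: there are $n(n-1)$ arcs, and the orders on distinct arcs are distinct because the set of directions realizing a fixed order is the intersection over all pairs of the open semicircles $\{\,u:\operatorname{sign}((p_i-p_j)\cdot u)=\varepsilon_{ij}\,\}$; as an intersection of open half-planes through the origin restricted to the circle, this set is a single (possibly empty) connected arc, so each order occupies exactly one arc. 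For \emph{completeness}: any monotone order arises by definition from projection onto some line, i.e.\ from some direction $u(\theta)$ with no ties, which lies in one of the arcs and therefore appears during the sweep; thus the rotation enumerates all monotone orderings, each exactly once.

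The step I expect to be the main obstacle is verifying that each event is a clean swap of two \emph{adjacent} elements and that no two events coincide; both reduce to the genericity statement that the connecting lines have pairwise distinct directions (which in particular forbids three collinear points, since a collinear triple produces coincident connecting lines). I would therefore make the general-position hypothesis precise to this effect, and note that if parallel connecting lines were permitted the only change is that several \emph{disjoint} adjacent transpositions could fire simultaneously; this can be dispatched either by perturbing $\ell_0$, or by regarding such a coincidence as the composition of the individual adjacent swaps applied in either order, which does not affect the conclusion that the enumeration produces every monotone ordering.
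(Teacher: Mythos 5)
The paper does not prove this statement at all; it is quoted as a known result of Goodman and Pollack with a citation, so there is no in-paper proof to compare against. Your argument is correct and is essentially the standard one underlying that result: parametrizing directions by an angle, locating the $n(n-1)$ tie-directions where $u(\theta)\perp(p_i-p_j)$, observing that each of the resulting arcs carries a constant ordering and that each realizable ordering corresponds to a single connected arc (an open convex cone of directions), and checking that each event is a clean adjacent transposition. You also correctly isolate the precise genericity needed (pairwise distinct directions of connecting lines, which rules out collinear triples) and note how simultaneous disjoint swaps would be handled otherwise -- a hypothesis the paper leaves implicit in the phrase ``general position.''
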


Furthermore, the sequence of swaps that change each ordering to the next one can be found in $O(n^2\log n)$ time by sorting the $O(n^2)$ lines (determined by all pairs of points) by their slopes. 

This gives a straight-forward $O(n^3)$ time algorithm to find compatible monotone paths, since we can generate the $O(n^2)$ monotone orderings of $P$ in constant time per ordering, and check each one for monotonicity in $Q$ in linear time.

We now present a more efficient $O(n^2 \log n)$ time algorithm.
For ease of notation, relabel the points so that the order of points $P$ along $\ell_0$ is $1, 2,\dots,n$.
As the line $\ell$ rotates, let $L^P_0, L^P_1, \ldots L^P_{t-1}$, where $t=n(n-1)$, be the monotone orderings of $P$, 
and let $S_P$ be the corresponding swap sequence.
Similarly, let $L^Q_0, L^Q_1, \ldots L^Q_{t-1}$ be the monotone orderings of $Q$ and let $S_Q$ be the corresponding swap sequence (Figure~\ref{fig:rotation}).  
We need to find whether there exist some $i$ and $j$ such that $L^P_i = L^Q_j$. 
As noted above, $S_P$ and $S_Q$ have size $O(n^2)$ and can be computed in time $O(n^2 \log n)$.

\remove{ older version -- what has changed is the order of presentation: first Goodman-Pollack, then the alg. via duality and arrangements, then 
the fact that we get an n^3 algorithm, then segue to our faster alg.

This gives a straight-forward $O(n^3)$ time algorithm.
 In this section we give an $O(n^2)$ time algorithm.


We first describe some preliminary results, which will help describing our algorithm.

\begin{theorem} [Goodman and Pollack~\cite{goodman1980combinatorial}]
Given a configuration $\it{C}$  of $n$ points in general position labelled 
$1, 2, \dots,n$ and a directed line $\ell$ which is not orthogonal to any line determined by two points of $\it{C}$, the orthogonal projection of $\it{C}$ on $\ell$ determines a permutation of $1, 2, \dots,n$ in an obvious way. 
\attention{Note that we get all monotone orderings.}
As the line $\ell$ rotates counter-clockwise about a fixed point we obtain a sequence of permutations of period $2 {n \choose 2}= n(n-1)$. Moreover, successive permutations differ only by having the order of two adjacent labels switched. 
\label{theorem:ordering}
\end{theorem}

Theorem \ref{theorem:ordering} gives us a natural way to solve the  compatible monotone paths problem. Choose some starting line $\ell_0$ in $P$, and relabel the points in $P$ and $Q$ such that the order of the points of $P$ along $\ell_0$ is $1, 2,\dots,n$. As the line $\ell_0$ rotates about a fixed point, 
 we obtain $n(n-1)$ different orderings of $P$, e.g., see Figure~\ref{fig:rotation}.  Call these orderings $L^P_0, L^P_1, \ldots L^P_{t-1}$, where $t=n(n-1)$.  Each $L^P_i$ differs from the previous one by a single swap of two adjacent elements, 
 \suggestion{and $L^P_{n \choose 2}$ is the reverse list $n, n-1, \ldots, 1$.} 
 Note that we do not explicitly compute all these orderings (which would take $O(n^3)$ time), but compute a list $S_P$ of swaps that implicitly corresponds to $L^P_0, L^P_1, \ldots L^P_{t-1}$. It is straightforward to construct $S_P$ in $O(n^2)$ time by using the line arrangement determined by the pairs of points in $P$~\cite{goodman1980combinatorial}. 
 Similarly, for $Q$, we can construct a list $S_Q$ for the orderings  $L^Q_0, L^Q_1, \ldots L^Q_{t-1}$.  We now need to see whether there exist some $i$ and $j$ such that $L^P_i = L^Q_j$. This is nontrivial since $S_P$ and $S_Q$ are implicit, and we want to keep the time complexity within $O(n^2)$.
 }

\begin{figure} 
\centering
\includegraphics[width=.7\textwidth]{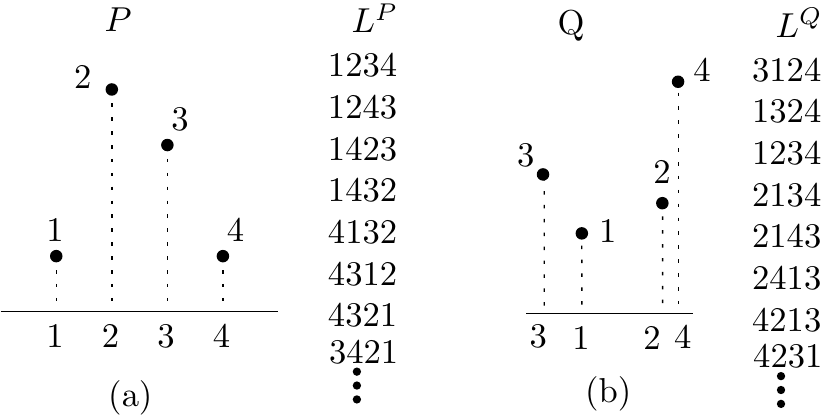}
\caption{Illustration for computing compatible monotone paths.
}
\label{fig:rotation}
\end{figure}
 
Recall that the \emph{inversion number}, $I(L)$ of a permutation $L$ is the number of pairs that are out of order. 
 It is easy to see that the inversion numbers of the $L^P_i$'s progress from $0$ to ${n \choose 2}$  and back again.
 In particular, $I(L^P_i) = i$ for $0 \le i \le {n \choose 2}$.
 Our algorithm will compute the inversion numbers of the $L^Q_j$'s, which also have some structure. 
 Let $I_j$ be the inversion number of $L^Q_j$.  Note that we can compute $I_0$ in $O(n \log n)$ time---sorting algorithms can be modified to do this~\cite{DBLP:books/daglib/0023376}.
 
 \begin{claim}
 For all $j$, $1 {\le} j {\le} n(n-1)$, $I_j$ differs from $I_{j-1}$ by $\pm 1$, and can be computed from $I_{j-1}$ in constant time.
 \label{claim:inversions}
 \end{claim}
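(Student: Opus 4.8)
The plan is to read off the change in the inversion number directly from the single adjacent transposition guaranteed by Theorem~\ref{theorem:ordering}, and to carry along just enough state about the current ordering so that the update takes constant time.

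First I would invoke Theorem~\ref{theorem:ordering}: the ordering $L^Q_j$ is obtained from $L^Q_{j-1}$ by swapping two labels, say $u$ and $v$, that are \emph{adjacent} in $L^Q_{j-1}$, where without loss of generality $u$ immediately precedes $v$. The decisive observation is that the relative order of every pair of labels other than $\{u,v\}$ is identical in $L^Q_{j-1}$ and $L^Q_j$, since only these two adjacent entries move. Hence the inversion number can change only on account of the pair $\{u,v\}$. Exactly one of the two orderings of this pair --- the one in which the larger label comes first --- is an inversion, so swapping $u$ and $v$ either creates or destroys precisely one inversion. Concretely, $I_j = I_{j-1} + 1$ when $u < v$ (the pair goes from sorted to unsorted) and $I_j = I_{j-1} - 1$ when $u > v$; in either case $|I_j - I_{j-1}| = 1$, which is the first half of the claim.

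For the constant-time computation I would maintain the current ordering explicitly, storing $L^Q_{j-1}$ as an array together with the inverse map from each label to its position. The swap entry of $S_Q$ processed at step $j$ identifies the two adjacent positions, equivalently the labels $u$ and $v$ with $u$ first. Reading $u$ and $v$, comparing their labels to fix the sign, updating the running count, and exchanging the two adjacent array entries (and their position records) are all $O(1)$ operations. Starting from $I_0$, which is computed once in $O(n \log n)$ time by a modified merge sort as already noted, this yields every $I_j$ in worst-case constant time per step.

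The one point that needs care, and the step I would regard as the main obstacle, is determining the sign of the change in $O(1)$: the sign depends not on the labels alone but on which of $u,v$ currently precedes the other, so one must actually know the relative order of the swapped pair at step $j$. This is exactly what the maintained ordering array supplies, so the difficulty dissolves once we commit to updating the ordering incrementally rather than recomputing inversions from scratch. I would also stress that adjacency of the swapped pair, again furnished by Theorem~\ref{theorem:ordering}, is essential: without it several pairs could reverse simultaneously and the change would no longer be $\pm 1$.
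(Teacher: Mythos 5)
Your argument is the same as the paper's: both proofs observe that consecutive orderings differ by one adjacent transposition, so only the swapped pair can change its relative order, and the inversion count therefore moves by exactly $\pm 1$ with the sign determined by whether the pair goes from sorted to unsorted. Your additional remarks about maintaining the ordering array to read off the sign in $O(1)$ time are correct and simply make explicit the bookkeeping the paper leaves implicit.
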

 \begin{proof}
 $L^Q_j$ is formed by swapping one pair of adjacent elements in $L^Q_{j-1}$.  If this swap moves a smaller element after a larger one then $I_j {=} I_{j-1} {+} 1$.  Otherwise, it is $I_{j-1} {-} 1$.
 \end{proof}
 
 The main idea of our algorithm is as follows.  If $L^Q_j = L^P_i$,  then they must have the same inversion number, $I_j$.
 There is one value of $i$ in the range $0 \le i < {n \choose 2}$ that gives this inversion number, namely $i = I_j$.  
 There is also one value of $i$ in the second half of the range that gives this inversion number, but 
 we can ignore the second half of the range based on the following:
 
 \begin{remark}\label{r:half}
If there exist $i,j$ such that $L^P_i = L^Q_j$, then 
there is such a pair with
$i$ in the first half of the index range, i.e., $0 \le i < {n \choose 2}$.
\end{remark}
\begin{proof}
The second half of each list of orderings contains the reversals of the orderings in the first half~\cite{goodman1980combinatorial}.  
Thus if 
there is a match $L^P_i = L^Q_j$ then the reversals of the two orderings also provide a match, say $L^P_{i'} = L^Q_{j'}$, and either $i$ or $i'$ is in the first half of the index range. 
\end{proof}

Our plan is to iterate through the orderings $L^Q_j$ for $0 \le j < n(n-1)$.  Since each ordering differs from the previous one by a single swap,  we can update from one to the next in constant time. For each $j$, we will check if $L^Q_j$ is equal to $L^P_{I_j}$, i.e., 
for each $j$, $0 \le j < n(n-1)$ we will compute the following four things:
\begin{itemize}
    \item $L^Q_j$, $I_j$, $L^P_{I_j}$, and 
    \item $H_j$, which is 
    the \emph{Hamming distance}---i.e., the number of mismatches---between $L^Q_j$ and $L^P_{I_j}$ 
\end{itemize}

If we find a $j$ with $H_j=0$ then we output $L^Q_j$ and $L^P_{I_j}$ as compatible monotone paths.  
Otherwise, we declare that no compatible monotone paths exist.  Correctness of this algorithm follows from Remark~\ref{r:half} and the discussion above:

\begin{claim}
$P$ and $Q$ have compatible monotone paths if and only if $H_j=0$ for some $j$, $0 \le j < n(n-1)$.
\end{claim}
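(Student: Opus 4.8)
The plan is to prove both directions by reducing the existence of compatible monotone paths to the existence of a common ordering $L^P_i = L^Q_j$, and then exploiting the fact that on the first half of the rotation the inversion number of a $P$-ordering equals its index.

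I would dispatch the backward direction first, since it is the easy implication. Suppose $H_j = 0$ for some $j$ with $0 \le j < n(n-1)$. By the definition of $H_j$ as the Hamming distance between $L^Q_j$ and $L^P_{I_j}$, this means $L^Q_j = L^P_{I_j}$. Setting $i = I_j$ exhibits a single permutation of the labels that is simultaneously a monotone ordering of $P$ (namely $L^P_i$) and of $Q$ (namely $L^Q_j$). By Theorem~\ref{theorem:ordering} each $L^P_i$ is a genuine monotone ordering of $P$ and each $L^Q_j$ a monotone ordering of $Q$, and a monotone ordering is realized by a monotone (hence noncrossing) path; so this common ordering yields monotone paths in $P$ and in $Q$ with the same label sequence, i.e., compatible monotone paths. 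I should note in passing that $L^P_{I_j}$ is well-defined: since $L^Q_j$ is a permutation of $n$ elements, $I_j \in \{0,1,\ldots,{n \choose 2}\}$, and the stated progression $I(L^P_i) = i$ for $0 \le i \le {n \choose 2}$ guarantees a unique index in the first half realizing any such inversion count.

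For the forward direction, suppose $P$ and $Q$ admit compatible monotone paths. Then some label ordering is monotone in both, so by Theorem~\ref{theorem:ordering} there exist $i$ and $j$ with $L^P_i = L^Q_j$. By Remark~\ref{r:half} I may assume $0 \le i < {n \choose 2}$, i.e.\ that $i$ lies in the first half of $P$'s index range. Equal permutations have equal inversion numbers, so $I(L^P_i) = I(L^Q_j) = I_j$; combining this with the observation $I(L^P_i) = i$ forces $i = I_j$. Hence $L^P_{I_j} = L^P_i = L^Q_j$, so their Hamming distance vanishes, $H_j = 0$, as required.

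The only step that needs real care is the identification $i = I_j$, which rests on the map $i \mapsto I(L^P_i)$ being the identity on the first half of the rotation (so that inversion number pins down the index uniquely). This is exactly the earlier remark that the inversion numbers of the $L^P_i$ progress from $0$ to ${n \choose 2}$, each attained once. Remark~\ref{r:half} is what makes this usable: without discarding the second half of the index range, a reversed ordering could produce a matching permutation at an index different from $I_j$, and the "check only $L^P_{I_j}$" shortcut would fail. Everything else is routine.
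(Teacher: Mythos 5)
Your proof is correct and follows exactly the route the paper intends: the paper leaves this claim's justification implicit ("follows from Remark~\ref{r:half} and the discussion above"), and your argument simply makes explicit the same ingredients --- equal permutations have equal inversion numbers, $I(L^P_i)=i$ on the first half of the index range, and Remark~\ref{r:half} to restrict to that half. No gaps.
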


We now give the details of how to perform the above computations.  For $j=0$ we will compute everything directly, and for each successive $j$, we will show how to update efficiently.
We initialize the algorithm at $j=0$ by computing $L^Q_0$ and $I_j$ in $O(n \log n)$ time, $L^P_{I_j}$ in $O(n^2)$ time, and $H_j$ in linear time.

Now consider an update from $j-1$ to $j$.  
As already mentioned, $L^Q_j$ differs from $L^Q_{j-1}$ by one swap of adjacent elements, so we can update in constant time.
By Lemma~\ref{claim:inversions}, $I_j$ differs from $I_{j-1}$ by $\pm 1$ and we can compute it in constant time.
This also means that $L^P_{I_j}$ differs from $L^P_{I_{j-1}}$ by one swap of adjacent elements, so we can update it in constant time.

Finally, we can update the Hamming distance in a two-step process as the two orderings change.  When we update from $L^Q_{j-1}$ to $L^Q_j$, two positions in the list change, and we can compare them to the same positions in $L^P_{I_{j-1}}$ to update from $H_{j-1}$ to obtain the number of mismatches between $L^Q_j$ and $L^P_{I_{j-1}}$.  
 When we update to $L^P_{I_j}$, two positions in this list change, and we can compare them to the same positions in $L^Q_{I_j}$ to update to $H_j$.  This two-step process takes constant time.

In total, we spend $O(n^2)$ time on initialization and constant time on each of $O(n^2)$ updates, for a total of $O(n^2)$ time.
%
\remove{
\attention{Here is the previous version:}

We first explore some properties of $S_P$ and $S_Q$.
 The following remark follows from the construction in~\cite{goodman1980combinatorial}.
 
\begin{remark}
Consider the cyclic order of $S_P$. Once a pair of labels are swapped, they are not swapped again until we compute all the $\binom n2$ distinct swaps.
\attention{Should rewrite this and/or combine with remark below.}
\label{prop:swap_inversion}
\end{remark}

The second half of the list $L^P_0, L^P_1, \ldots L^P_{t-1}$ contains the reversals of orderings in the first half~\cite{goodman1980combinatorial}, which implies the following remark.

\begin{remark}\label{r:half}
If there exist $i,j$ such that $L^P_i = L^Q_j$, then we can find such $i,j$ such that $i$ is in the first half of the index range, i.e., $0 \le i < {n \choose 2}$.
\end{remark}

\suggestion{Define the \emph{inversion distance}, $d_I$, between two orderings $L$ and $L'$ of $1, \ldots , n$ to be the number of pairs that appear in 
opposite order in $L$ and $L'$. 
For example $d_I((1,2,3,4), (2,3,1,4)) = 2$ since the pairs $1,2$ and $1,3$ are in opposite order.
Inversion distance counts the number of swaps of adjacent elements 
that are required to turn one ordering into the other.
}
The inversion distance between two $n$-element lists can be computed in $O(n \sqrt{\log n})$ time~\cite{Chan:2010:CIO:1873601.1873616}. 

\begin{lemma}
\label{lemma:inversion_distance}
For each $i$,  $0 \le i < {n \choose 2}$, the inversion distance between $L^P_i$ and $L^P_0$ is $i$. 
\end{lemma}
\begin{proof}
 By Remark~\ref{prop:swap_inversion}, we can transform $L^P_0$ into $L^P_i$ by a sequence of $i$ unique adjacent swaps. This is exactly the number of pairs that appear in opposite order in the two lists, and thus $d_I(L^P_0,L^P_i) = i$.
\end{proof}

By Remark~\ref{r:half}, it suffices to restrict our attention to $L^P_i$, where $0 \le i < {n \choose 2}$. 
\suggestion{Using Lemma~\ref{lemma:inversion_distance} we obtain:}
\begin{cor}
If $L^Q_j$ and $L^P_i$ correspond to the same permutation, for some $0 \le i < {n \choose 2}$ and $0\le j < n(n-1)$, then $d_I(L^Q_j, L^P_1) = d_I(L^P_i, L^P_1) = i$.  
\label{cor:match_inversion}
\end{cor}

\subsection{Algorithm}

\suggestion{The idea is to iterate over lists $L^Q_j$ for $0 \le j < n(n-1)$, and for each $j$, to compute the following two quantities:} 
\begin{itemize}
    \item $d_I(L^Q_j, L^P_0)$, which we will refer to as $D(j)$.
    \item \suggestion{$d_H(L^Q_j, L^P_{D(j)})$, where $d_H$ is the Hamming distance, so this counts the number of mismatches between $L^Q_j$ and $L^P_{D(j)}$.}
\end{itemize}

We will show that 
monotone compatible paths exist 
if and only if there exists some $j$ such that  $d_H(L^Q_j, L^P_{D(j)})$ equals  zero. 
Note that we first compute $D(0)$,  which takes $O(n \sqrt{\log n})$ time~\cite{Chan:2010:CIO:1873601.1873616}, but for subsequent values of $j$, we will show how to compute the inversion distance in constant time (Lemma~\ref{lem:dj}). 
Similarly, computing $d_H$ takes $O(n)$ time 
\suggestion{for $j=0$, but for subsequent values of $j$, we will show how to computer $d_H$ in constant time (Lemma~\ref{lem:ham}).}
Thus the algorithm will have a run time of $O(n^2)$.

We first show that for $j\ge 1$, $d_I(L^Q_j, L^P_0)$ 
\suggestion{differs from $d_I(L^Q_{j-1}, L^P_0)$ by $\pm 1$.} 
This property helps us to compute $D(j)$ in $O(1)$ time.

\begin{lemma}\label{lem:dj}
 For $j\ge 1$, $D(j)$ must be either $D(j-1) + 1$ or $D(j-1) - 1$. Furthermore, we can compute $D(j)$ from $D(j-1)$ in $O(1)$ time.
\end{lemma}
\begin{proof}
First note that if $D(j)$ is equal to $D(j-1)$, then $d_I(L^Q_{j}, L^P_1) = d_I(L^Q_{j-1}, L^P_1)$. However, by Remark~\ref{prop:swap_inversion}, two consecutive permutations in $L^Q$ cannot have the same inversion distance from any other permutation in $L^Q$ (assuming $n\ge 3$). Thus $D(j)$ cannot be  equal to $D(j-1)$. 

On the other hand, the absolute difference between $D(j)$ and $D(j-1)$ cannot be larger than 1, since the inversion distance between $L^Q_j$ and $L^Q_{j-1}$ is 1 (i.e., they differ by only one swap). 


To compute $D(j)$, we use the fact that $L^Q_j$ and $L^Q_{j-1}$ differ by exactly one swap of consecutive elements. Recall that we relabelled the points in $P$ such that $L^P_0$ equals $1, 2, \dots, n$. Therefore, $D(j)$ will be larger or smaller than $D(j-1)$ based on whether we are swapping (from left to right) a smaller value with a larger value, or a larger value with a smaller value.
\end{proof}

\begin{lemma}\label{lem:ham}
For $j\ge 1$, $d_H(L^Q_j, L^P_{D(j)})$ can be computed in constant time.
\end{lemma}
\begin{proof}
We maintain two sequences, $A^P$ and $A^Q$, to store the the current permutations $L^Q_j$ and $L^P_{D(j)}$,  respectively. We will also maintain a counter to keep track of the number of mismatches. 

\attention{Anna: I'm redoing indices and got to here.  To be continued.}

We compute $d_H(L^Q_1, L^P_{D(1)+1})$ explicitly in $O(n)$ time by comparing the corresponding elements  in both the sequences. For each $j$ from $2$ to $n(n-1)$, we compute $d_H(L^Q_j, L^P_{D(j)+1})$ in two phases, as follows. 

In the first phase we compute $d_H(L^Q_j, L^P_{D(j-1)+1})$ by performing the swap on $L^Q_{j-1}$, which was stored in $A^Q$, to obtain $L^Q_{j}$. Thus $A^Q$ now contains $L^Q_j$. At the same time, we compute the number of mismatches between $L^Q_j$  and  $L^P_{D(j-1)+1}$. Assume that 
 we swapped the $k$-th and ${k+1}$-th elements of $L^Q_{j-1}$. Then it is straightforward to update  the number of mismatches by comparing the swapped elements with the $k$-th and $(k+1)$-th elements of $L^P_{D(j-1)+1}$.


In the second phase we compute $D(j)$ (using Lemma~\ref{lem:dj}), and then perform the necessary swap on $L^P_{D(j-1)+1}$, which was stored in $A^P$,  to obtain $L^P_{D(j)+1}$. Thus $A^P$ now contains $L^Q_j$. At the same time, we compute $d_H(L^Q_j, L^P_{\suggestion{D(j)}+1})$ following the idea we used in the first phase, i.e., examining the positions of the swapped elements in both sequences. 

Since we only need a constant number of positions to examine in $A^P$ and $A^Q$, the updates can be carried out in $O(1)$ time. 
\end{proof}

It is immediate from Lemmas~\ref{lem:dj}--\ref{lem:ham} that the algorithm takes $O(n^2)$ time. Thus the only concern is whether there exists some $i,j$, where the algorithm fails to notice $L^Q_j = L^P_i$. Since we iterate over the all the permutations in $L^Q$, but only the first half of the permutations in $L^P$, this may happen for some $L^Q_j$. However, by Remark~\ref{r:half}, there would be some $L^Q_{j'}$ that coincides with some permutation of the first half of $L^P$ (here $L^Q_{j'}$ is in fact the reverse   of $L^Q_j$).

}  
We thus obtain the following theorem. 
\begin{theorem}
Given two point sets, each containing $n$ points labelled from $1$ to $n$, 
one can find a pair of compatible monotone paths or determine that none exist
in $O(n^2 \log n)$ time.
\end{theorem}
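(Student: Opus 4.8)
The plan is to combine the Goodman--Pollack enumeration (Theorem~\ref{theorem:ordering}) with an incremental bookkeeping scheme, so that instead of the naive $O(n^3)$ approach of generating each of the $\Theta(n^2)$ monotone orderings of $P$ and testing it against $Q$ in linear time, I maintain just enough state to check a single candidate per step in $O(1)$ amortized time. First I would fix a starting line $\ell_0$, relabel so the order of $P$ along $\ell_0$ is $1,2,\dots,n$, and compute both swap sequences $S_P$ and $S_Q$ by sorting the $\Theta(n^2)$ pairwise slopes; this is the $O(n^2 \log n)$ step and dominates the final running time. After relabelling, the $i$-th ordering of $P$ has inversion number exactly $i$ for $0 \le i \le \binom{n}{2}$, so the inversion number of a list \emph{pins down} which $L^P_i$ it could possibly equal.

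The key structural observations I would establish in order are: (i) by Remark~\ref{r:half}, it suffices to search for a match $L^P_i = L^Q_j$ with $i$ in the first half of the index range, since the second half merely lists the reversals; (ii) if $L^Q_j = L^P_i$ then necessarily $I(L^Q_j) = i$, so for each $j$ there is exactly \emph{one} candidate index $i = I_j$ into the $P$-list to check; and (iii) the inversion number $I_j$ of $L^Q_j$ changes by exactly $\pm 1$ at each swap (Claim~\ref{claim:inversions}), so $I_j$, and hence the candidate ordering $L^P_{I_j}$, can be maintained incrementally. The algorithm then iterates $j$ from $0$ to $n(n-1)-1$, maintaining four quantities $L^Q_j$, $I_j$, $L^P_{I_j}$, and the Hamming distance $H_j$ between $L^Q_j$ and $L^P_{I_j}$, and declares success exactly when some $H_j = 0$.

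The crux, and the step I expect to require the most care, is the constant-time maintenance of the Hamming distance $H_j$. Both lists under comparison move between steps: advancing $j$ applies one adjacent swap to $L^Q_j$, while the candidate index $I_j = I_{j-1} \pm 1$ forces one adjacent swap in $L^P_{I_j}$ as well. I would handle this as a two-phase update: first apply the $Q$-swap and recompute the affected positions against the \emph{old} $L^P_{I_{j-1}}$ to get an intermediate distance, then apply the $P$-swap and recompute the affected positions against the now-updated $L^Q_j$. Since each swap touches only two adjacent positions, each phase inspects $O(1)$ entries of the stored arrays and adjusts the mismatch counter accordingly, so the whole update is $O(1)$. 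The initialization at $j=0$ computes $L^Q_0$ and $I_0$ in $O(n\log n)$ by a sort-based inversion count~\cite{DBLP:books/daglib/0023376}, builds $L^P_{I_0}$ in $O(n^2)$, and evaluates $H_0$ in linear time.

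Correctness follows from the three observations above: the $H_j=0$ test is both necessary and sufficient for a valid compatible monotone path starting from ordering $L^Q_j$, and Remark~\ref{r:half} guarantees no match is missed despite only consulting the first half of the $P$-orderings. For the running time, initialization is $O(n^2)$, each of the $\Theta(n^2)$ updates is $O(1)$, and the one-time computation of $S_P$ and $S_Q$ is $O(n^2 \log n)$; this last term dominates and yields the claimed $O(n^2 \log n)$ bound, completing the proof.
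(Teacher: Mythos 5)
Your proposal is correct and follows essentially the same route as the paper: the same inversion-number indexing $i = I_j$ into the first half of the $P$-orderings, the same appeal to the reversal symmetry to justify ignoring the second half, and the same constant-time two-phase update of the Hamming distance as both lists undergo one adjacent swap each. The initialization costs and the final accounting (iteration in $O(n^2)$, dominated by the $O(n^2\log n)$ slope-sorting step) also match the paper's analysis.
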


\jyoti{\section{Conclusion}
In this paper we gave some fast polynomial-time algorithms to find compatible paths (if they exist) on a given pair of labelled point sets under some  constraints on the paths or the point sets. An interesting direction for future  research would be to extend our linear-time algorithm to point sets that determine $k>1$  nested convex hulls.  Since the problem  is NP-complete~\cite{DBLP:conf/isaac/HuiS04} in general, it would also be interesting to examine fixed-parameter tractable algorithms.}  

\subsection*{Acknowledgement} We thank the organizers of the Fields Workshop on Discrete and Computational Geometry, held in July 2017 at Carleton University. 
E.~Arseneva is supported in part by the SNF Early Postdoc Mobility grant P2TIP2-168563 and by F.R.S.-FNRS; 
A.~Biniaz, K.~Jain, A.~Lubiw, and D.~Mondal are supported in part by NSERC.

\bibliographystyle{abbrv}
\bibliography{main}

\end{document}